\documentclass[11pt]{article}
\usepackage{showlabels}
\usepackage{epsfig,psfrag,amsmath,amssymb,latexsym}
\usepackage{amscd }
\usepackage{amsfonts}
\usepackage{graphicx}
\usepackage[T1]{fontenc}
\usepackage{xcolor}
\usepackage[colorlinks=true]{hyperref}
\usepackage{amsthm, amscd,epsfig,enumerate, xcolor}
\pagestyle{plain} \oddsidemargin0cm \topmargin-.6cm \textheight22cm
\textwidth17.5cm
\parindent0.5cm

\newtheorem{lemma}{Lemma}[section]

\newtheorem{proposition}{Proposition}[section]

\numberwithin{equation}{section}

\def\Qt{q_{\theta}}
\def\Qy{q_{Y}}
\def\hQt{\hat{q}_{\theta}}
\def\hQy{\hat{q}_{Y}}

\hypersetup{colorlinks,citecolor=blue,linkcolor=blue,urlcolor=blue}
\def\university#1{{\sl \begin{center} #1 \vspace{2pt} \end{center} } }
\def\inst#1{\vspace{1pt} \unskip$^{#1}$}
\begin{document}
\title{Estimation of Viterbi path in Bayesian hidden Markov models}
\author{J\"{u}ri Lember\inst{1}\footnote{Corresponding author, e-mail: jyri.lember@ut.ee}, Dario Gasbarra\inst{2},  Alexey Koloydenko\inst{3} and Kristi Kuljus\inst{1}}
\maketitle
\university{\inst{1}University of Tartu, Estonia; \inst{2}University of Helsinki, Finland;\\ \inst{3}Royal Holloway, University of London, UK}
\begin{abstract}
The article studies different methods for estimating the Viterbi
path in the Bayesian framework. The Viterbi path is an estimate of
the underlying state path in hidden Markov models (HMMs), which has
a maximum joint posterior probability. Hence it is also called the
maximum a posteriori (MAP) path. For an HMM with given parameters,
the Viterbi path can be easily found with the Viterbi algorithm. In
the Bayesian framework the Viterbi algorithm is not applicable and
several iterative methods can be used instead. We introduce a new
EM-type algorithm for finding the MAP path and compare it with
various other methods for finding the MAP path, including the
variational Bayes approach and MCMC methods. Examples with simulated
data are used to compare the performance of the methods. The main
focus is on non-stochastic iterative methods and our results show
that the best of those methods work as well or better than the best
MCMC methods. Our results demonstrate that when the primary goal is
segmentation, then it is more reasonable to perform segmentation
directly by considering the transition and emission parameters as
nuisance parameters.
\end{abstract}

\paragraph{Keywords:}
HMM, Bayes inference, MAP path, Viterbi algorithm, segmentation, EM, variational Bayes, simulated annealing
\section{Introduction and preliminaries}
Hidden Markov models (HMMs) are widely used in application areas
including speech recognition, computational linguistics,
computational molecular biology, and many more. Recently there has
been a continuing interest to apply HMMs in the Bayesian framework,
where model parameters are assumed to have a prior distribution. The
Bayesian approach has the advantage that it allows researchers to
incorporate their prior beliefs and information in the modeling
process. However the Bayesian framework might heavily complicate the
analysis, since a mixture of HMMs is not typically an HMM any more.
Therefore the algorithms and methods valid for a single HMM might
not be applicable in the Bayesian setup. For example, when several
HMMs are mixed, the optimality principle no longer holds and dynamic
programming algorithms such as the Viterbi algorithm and
forward-backward algorithms do not work. Therefore finding the
Viterbi path in the Bayesian framework is a difficult task where no
simple solution exists.

Most of the literature on HMMs in the Bayesian framework (see, e.g.
\cite{HMMbook,koski,RobertMarin,Scott,Ryden,titterington,corander})
deals with MCMC methods \cite{Scott,HMMbook,koski,Ryden,Besag}. When
the goal is estimation of the underlying hidden state path for a
given observation sequence which we refer to as {segmentation} (also
terms {decoding, denoising} are used), various methods based on
Gibbs sampling, for example simulated annealing, are often used.
Note that simulated annealing works well only if the applied cooling
schedule is correct and the number of sweeps large enough. In this
article, we study and compare different non-stochastic methods for
finding the Viterbi path in the Bayesian framework, because in
comparison to MCMC methods non-stochastic methods are
computationally less demanding. We introduce a new EM-type
segmentation method which we call {\it segmentation EM}, and give an
overview of other most commonly used non-stochastic segmentation
methods. Unlike in the traditional EM algorithm, in the segmentation
EM algorithm the hidden path is considered as the main parameter of
interest. The performance of segmentation EM is compared with the
other segmentation methods (including the variational Bayes approach
and parameter EM estimation procedures) in numerical examples.
According to our numerical examples, the segmentation EM method and
a closely related method which we call {\it segmentation MM},
perform at least as well or better than MCMC methods. Moreover our
empirical studies demonstrate that the direct Bayesian segmentation
approach outperforms the commonly used parameters-first approach,
where segmentation is performed after the parameters have been
estimated.

Viterbi path estimation in the Bayesian framework has been studied and applied in speech tagging problems \cite{tag,johnson,garca,johnson2}.
An overview of Bayesian HMMs in speech tagging can be found in \cite{two}.
These papers study several methods for calculating the Viterbi path, including simulated annealing, variational Bayes, and also the parameters-first approach.
The results of the studies are contradictory (see, e.g. \cite{johnson2}), showing that further research in this area is needed.

This article is organized as follows. In the rest of this section we introduce the problem of estimating the Viterbi path in the frequentist and Bayesian frameworks;
at the end of the section the main objectives of the article will be summarized.
Section \ref{methods} gives a brief overview of the methods and algorithms that we consider.  Section \ref{simulations} presents the results of numerical examples: first, in subsection
\ref{Imodel}, we consider the case where the emission parameters are
known and transition parameters are unknown having Dirichlet
prior distributions; then in subsection \ref{caseII} we consider the case where the
emission parameters are also unknown. In Section \ref{HyperRole} the role of hyperparameters and their effect on segmentation results is discussed.
Section \ref{disc} explains the relationships and similarities of the  segmentation algorithms that we study.
The formulae needed to apply the  segmentation methods are presented in the Appendix.

\subsection{Segmentation with hidden Markov models}
\paragraph{Hidden Markov model.}
\noindent Consider a homogeneous Markov chain $Y^n:=Y_1,\ldots,Y_n$
with states $S=\{1,\ldots,K\}$. Let $X^n:=X_1,\ldots,X_n$ be random
variables taking values on ${\cal X}$ such that: 1) given $Y^n$, the
random variables $\{X_t\}$, $t=1,\ldots,n$, are conditionally
independent; 2) the distribution of $X_t$ depends on $Y^n$ only
through $Y_t$. Since only $X^n$ is observed, the pair $(Y^n,X^n)$ is
referred to as a {\it hidden Markov model}. Because the underlying
Markov chain is homogeneous, the model is fully specified by the
transition matrix $\mathbb{P}=(p_{lj})$, $l,j=1,\ldots,K$, initial
probabilities $p_{0k}$, $k=1,\ldots,K$, and {\it emission
distributions} $P(X_t\in \cdot |Y_t=k)$, $k=1,\ldots,K$. Thus there
are two types of parameters in our model: transition parameters
$\Theta_{tr}$ where $\theta_{tr}\in \Theta_{tr}$ specifies the
transition matrix and the initial probability vector, and emission
parameters $\Theta_{em}$. Often the initial distribution is fixed or
a function of the transition matrix. In this case any $\theta_{tr}$
can be identified with a transition matrix. The whole parameter
space is given by $\Theta:=\Theta_{em}\times \Theta_{tr}$. Without
loss of generality we assume that all emission distributions have
{\it emission densities} $f_k$ with respect to some common reference
measure on ${\cal X}$. Typically, all emission densities are assumed
to belong to the same parametric family ${\cal F}=\{f(\cdot|\theta):
\theta\in \Theta_{em}\}$. Thus, for any state $k\in S$, there is a
$\theta^k_{em}\in \Theta_{em}$ such that
$f_k(\cdot)=f_k(\cdot|\theta^k_{em})$. For any realization $x^n$ of
the random variables $X^n$ and for any realization $y^n$ of the
Markov chain $Y^n$, let the joint likelihood of $(y^n,x^n)$ be
denoted by $p(y^n,x^n)$. Similarly, $p(x^n)$ and $p(y^n)$ denote the
marginal likelihoods of $p(y^n,x^n)$, and $p(x^n|y^n)$ and
$p(y^n|x^n)$ stand for the conditional likelihoods.
We assume that the length of the observation sequence is fixed and leave it from the notation. Thus, we
 denote by $x\in {\cal X}^n$ a vector $x^n$ of observations
and by $y,s\in S^n$ state sequences. Also, $X$ and $Y$ stand for
$X^n$ and $Y^n$, respectively, $p(y,x)$ is used instead
of $p(y^n,x^n)$ and so on. To indicate a
single entry of a vector $x$ or $s$, we use $x_t$ or $s_t$, $t= 1,\ldots,n$. For any $t=1,\ldots,n$, $p_t$ is used
for marginal probability, for example
$$p_t(y_t|x)=P(Y_t=y_t|X=x)=\sum_{s: s_t=y_t}p(s|x).$$
\paragraph{Viterbi path.} Suppose $\theta=(\theta_{tr},\theta_{em}) \in \Theta$ is given, that is both the transition and emission parameters are given. The {\it segmentation problem} consists of
estimating the unobserved realization of the underlying Markov chain
$Y$  given  observations $X$. Formally, we are looking for a
mapping $g:{\cal X}^n \to S^n$ called a {\it classifier}, that maps
every sequence of observations into a state sequence. The best
classifier $g$ is often defined via a {\it loss function}, for an overview of risk-based segmentation with HMMs based on loss functions, see \cite{seg,intech,chris}.
The {\it Viterbi path} for given parameters $\theta$ is defined as a state path that maximizes the conditional probability $p(y|x,\theta)$. The solution of $\max_{y\in S^n}p(y|x,\theta)$
can be found by a dynamic
programming algorithm called the {\it Viterbi algorithm}.
For a given observation sequence $x$ and an HMM with initial probabilities $(p_{0k})$,
transition matrix $\mathbb{P}=(p_{lj})$ and emission densities
$f_k(\cdot)$ ($k,l,j\in S$), the Viterbi algorithm is the following:
\begin{enumerate}
  \item[(1)] for every $k\in S$, define $\delta_1(k):=p_{0k} f_k(x_1)$;
  \item[(2)] for $t=1,\ldots,n-1$ and $k\in S$ calculate
  \[ \delta_{t+1}(k)=\max_{l \in S}\big(\delta_{t}(l)p_{lk}\big)f_k(x_{t+1}), \]
  and record
$$l_t(k):=\arg\max_{j \in S} \delta_t(j)p_{jk};$$
  \item[(3)] find the Viterbi path $v$ by backtracking:
$$v_n:=\arg\max_{k \in S} \delta_n(k),\quad v_t=l_t(v_{t+1}),\quad
t=n-1,\ldots,1.$$
\end{enumerate}
Observe that the Viterbi path is not necessarily unique, any path
$v=\arg\max_{y\in S^n} p(y|x,\theta)$ is called a Viterbi path. When
emission distributions are continuous (like Gaussian as in the
present paper), the Viterbi path is unique almost surely. Although
the Viterbi path is not optimal for minimizing the expected number
of classification errors, it is the most popular and most studied
hidden path estimate in practice (see
e.g.~\cite{rabiner,koski,bishop,CentroidEstimators2008}). The hidden
path estimate that minimizes the expected number of
 classification errors is the so-called PMAP (pointwise maximum a posteriori) path which maximizes the sum $\sum_{t=1}^n p_t(y_t|x,\theta)$. Since the sum can obviously be maximized termwise, the
 PMAP path $y$ is just a state path where $y_t$, $t=1,\ldots,n$, maximizes the marginal probability $p_t(y_t|x,\theta)$, and therefore the PMAP path can be found pointwise.
 Because of the pointwise optimization its posterior probability can be zero due to inadmissible transitions. The Viterbi and PMAP path are of different nature and for
 many models the difference between them can be rather big. For a
 discussion about the Viterbi, PMAP and related paths, see \cite{JMLR}.
%
%
\subsection{Bayesian approach} The Viterbi algorithm is applicable when the transition matrix as well
as emission parameters are known. When this is not the case, the
standard approach is to first estimate the parameters and then
perform segmentation. This approach -- parameters first, then
segmentation -- is also applicable in the Bayesian framework, where
the parameters of an HMM are considered random. Indeed, one can find
a Bayesian point estimate, typically the posterior mode
$\hat{\theta}=\arg\max_{\theta}p(\theta|x)$, and then perform
segmentation. However, if the primary goal is segmentation rather
than parameter estimation, one can consider the true underlying path
as the actual parameter of interest and the emission and transition
parameters as nuisance parameters, and perform segmentation
directly. {Let us explain that approach more formally}.
\paragraph{Bayesian Viterbi path.} Let $\pi$ be a prior density in $\Theta$  with respect to a
reference measure $d\theta$. For any $\theta=(\theta_{tr},\theta_{em})$ and for any pair $(x,y)$,
\begin{equation}\label{likef}
p(x,y|\theta)=p(x|y,\theta_{em})p(y|\theta_{tr}), \quad  p(y,x)=\int p(y,x|\theta) \pi(\theta)d \theta.
\end{equation}
It is important to note that although for any parameter set $\theta$
the measure $p(y,x|\theta)$ is a distribution of an HMM, then the
measure $p(y,x)$ obtained after mixing is a distribution of a
process that in general is not an HMM (sometimes called mixed-HMM,
see \cite{maruotti1,maruotti2}). This complication is typical in the
Bayesian setup and not specific to HMMs -- a mixture of a product
measure (the law of independent random variables) is not a product
measure anymore, a mixture of Markov chains is not a Markov chain
anymore, and so on. However, this circumstance complicates the whole
analysis. As previously, the {\it Viterbi path} $v$ is defined as
any state sequence $y\in S^n$ that maximizes the probability
$p(y|x)$ over all state sequences: $v=\arg\max_{y\in S^n} p(y|x)$.
As in the case of HMM, the Viterbi path is not necessarily unique,
although in the case of continuous emissions it typically is. It
might happen though that for many paths the probability $p(y|x)$ is
very close to maximum, and even if these paths are not formally
Viterbi paths (because they correspond to local and not to global
maximums), they might often be outputs of the iterative algorithms
considered in the article. This could seem disappointing at the
first sight that the algorithms fail to find the global maximum, but
since the conditional probabilities are very close to the maximum,
these suboptimal paths could be considered as good substitutes of
the Viterbi path.

As it is typical in Bayesian analysis, any Viterbi path $v$ is best
only on average. For a given parameter $\theta$ generated from
$\pi$, the path $v$ obviously does not need to maximize the
 probability $p(y|x,\theta)$, but it maximizes (over $y$) the
average probability:
\begin{equation}\label{possa}
p(y|x)={p(y,x)\over p(x)}={\int p(y,x|\theta) \pi(\theta)d
\theta\over p(x)}={\int p(y|x,\theta)p(x|\theta) \pi(\theta)d
\theta\over p(x)}=\int
p(y|x,\theta)p(\theta|x)d\theta,\end{equation} where $p(\theta|x)$
is the posterior probability. When the observations $x$ are
generated from a particular distribution with the true parameter
$\theta^*$ and $x$ is sufficiently long, then according to the
posterior consistency the posterior measure $p(\theta|x)$ is
concentrated around $\theta^*$, and then the Bayesian approach
should give more or less the same result as the parameter-first
approach. Therefore, the Bayesian approach is more appealing when
the sequence of observations $x$ is not very long. In particular, it
might be a very reasonable choice in HMM pattern recognition setup
when the training data consist of pairs
$(x^1,y^1),\ldots,(x^m,y^m)$. For a training pair $(x^j,y^j)$ the
observation sequence $x^j$ and the corresponding state sequence
$y^j$ are assumed to be generated from an HMM with unknown parameter
$\theta^j$. There is a target sequence $x$ whose Viterbi path needs
to be estimated. Observe that in pattern recognition the Viterbi
path ${v}$ is a common choice, because it minimizes the following
expected loss
$${v}=\arg\min_y \sum_{y'}L(y',y)p(y'|x),\quad L(y',y)=\left\{
                                                             \begin{array}{ll}
                                                               1, & \hbox{when $y=y'$;} \\
                                                               0, & \hbox{else.}
                                                             \end{array}
                                                           \right.$$
In this setup, for every $j$ the parameter estimate $\hat{\theta}^j$
can be found. When these estimates do not vary much, it is
reasonable to believe that the true parameters $\theta^j$ are the
same: $\theta^j=\theta^*$. The best one can do is to  aggregate all
estimates $\hat{\theta}^j$ into one estimate $\hat{\theta}$ which
gives a reliable estimate of $\theta^*$, and use $\hat{\theta}$ to
estimate the Viterbi path: $\hat{v}=\arg\max_y p(y|x,\hat{\theta})$.
However, when the parameter estimates $\hat{\theta}^j$ vary a lot,
it is reasonable to believe that the training data parameters
$\theta^j$ are not the same, but rather constitute a sample from a
prior distribution $\pi$. The prior $\pi$ could be chosen so that
its variance (or mean, moments, hyperparameters) matches the
variance (or mean, moments, hyperparameters) of the sample
$\hat{\theta}^1,\ldots, \hat{\theta}^m$. Assuming that the true
parameter is generated by a prior $\pi$, the best one can then do is
to find the path  that maximizes the average likelihood  as in
(\ref{possa}).
\paragraph{Prior distributions.} In this article, we assume that the number of states $K$ as well as initial
probabilities $p_{0k}$ are known {and uniform. It means that we
shall not put any prior on $K$ and initial probabilities, and
throughout the paper we take $p_{0k}=1/K$, $k=1,\ldots,K$. However,
we shall put  prior $\pi$ on the set of transition matrices and
emission parameters.} The prior $\pi$ is assumed to be such that
emission and transition parameters are independent:
$\pi(\theta)=\pi_{em}(\theta_{em})\pi_{tr}(\theta_{tr})$, where
$\pi_{em}$ and $\pi_{tr}$ are marginals. Then
\begin{equation}\label{fact}p(y,x,\theta)=p(y|\theta_{tr})\pi_{tr}(\theta_{tr})p(x|y,\theta_{em})\pi_{em}(\theta_{em}),\end{equation}
with
\begin{align*}
p(y|\theta_{tr})=
p_{0y_1} \prod_{lj}\big(p_{lj}(\theta_{tr})\big)^{n_{lj}(y)},\quad
p(x|y,\theta_{em})=
\prod_{k=1}^K\prod_{t: y_t=k} f_{k}(x_t|\theta^k_{em}),\end{align*}
where $n_{lj}(y)$ denotes the number of
transitions from state $l$ to state $j$ in the state sequence $y$.
%
In particular, (\ref{fact}) ensures that for given $y$ and $x$,
$p(y)$ depends on transition priors and $p(x|y)$ depends on emission
priors only. The independence also implies that the posterior of the
transition parameters depends only on $y$:
$p(\theta_{tr}|x,y)=p(\theta_{tr}|y)$, and that $\theta_{em}$ and
$\theta_{tr}$ are independent under posterior measure:
\begin{equation}\label{indt}
p(\theta|x,y)=p(\theta_{tr}|y)p(\theta_{em}|y,x).
\end{equation}
We consider the case where emission parameters are componentwise independent, that is
$\pi_{em}(\theta_{em})=\pi^1_{em}(\theta^1_{em})\cdots
\pi_{em}^K(\theta^K_{em})$ for $\theta_{em}=(\theta^1_{em},\ldots, \theta^K_{em})$,
which implies the independence under posterior:
\begin{equation}\label{factem}
p(\theta_{em}|x,y)=\prod_{k=1}^Kp(\theta^k_{em}|x,y).
\end{equation}
Typically transition parameters are the transition probabilities, that is
$p_{lj}(\theta_{tr})=p_{lj}$. The standard approach in this case is
to model all the rows of a transition matrix   $\mathbb{P}=(p_{lj})$
independently with the $l$-th row having a Dirichlet prior
$\text{Dir}(\alpha_{l1},\ldots,\alpha_{lK})$, see e.g.
\cite{koski,HMMbook,RobertMarin,corander,BayesNonparam, tag}. Thus,
$$\pi_{tr} \big(\mathbb{P}\big)=\pi_{tr}(p_{11},\ldots,p_{1K})\pi_{tr} (p_{21},\ldots,p_{2K})\cdots \pi_{tr} (p_{K1},\ldots,p_{KK})\propto
\prod_{lj}p_{lj}^{\alpha_{lj}-1},$$ provided
$(p_{l1},\ldots,p_{lK})\in \mathbb{S}_K$, where $\mathbb{S}_K$ is a unit simplex.
Since the rows are independent under the prior,
they are also independent under the posterior, so that for a given path $y$,
the $l$-th row has a Dirichlet distribution:
$$ p\big((p_{l1},\ldots,p_{lK})|y\big)\sim {\text Dir}(\alpha_{l1}+n_{l1}(y),\ldots,\alpha_{lK}+n_{lK}(y)).$$
Let $n_l(y)=\sum_j n_{lj}(y)$ and $\alpha_l=\sum_j\alpha_{lj}$. Under a Dirichlet prior, the marginal probability of any path $y$ can be
calculated as (see e.g. (19) in \cite{corander})
\begin{equation}\label{p(y)}
p(y)=\int p(y|\theta_{tr})\pi_{tr}(\theta_{tr})d\theta_{tr}=
p_{0y_1}\prod_l {\Gamma(\alpha_l)\over
\Gamma(\alpha_l+n_l(y))}\prod_j
{\Gamma(\alpha_{lj}+n_{lj}(y))\over
\Gamma(\alpha_{lj})}.\end{equation}
As is common in the case of Dirichlet priors (see e.g.~\cite{BayesNonparam}), we will use the factorization
$\alpha_{lj}=Mq_{lj}$, where $Q=\big(q_{lj}\big)$ is a transition
matrix and $M>0$ can be regarded as the {\it precision parameter}. Thus,
$Q$ postulates our belief about the general form of the transition matrix and $M$ shows how strongly
we believe in it: the bigger $M$, the smaller the variance of $p_{lj}$.
\subsection{Objectives of the article}
The main goals of the article are the following:
\begin{itemize}
  \item[-] To give a brief overview of the most commonly
  used Bayesian segmentation methods (segmentation performed
  with Bayesian parameter estimates, segmentation MM,
  variational Bayes, iterative conditional mode, and simulated
  annealing) and study their performance. We present the general
  ideas behind the methods and derive the formulae needed for
  applying the methods. Although in general the methods
  mentioned above are well known, we believe that in the context
  of segmentation in Bayesian HMMs the methods are not so well
  studied and understood. The contradictory results in the
  speech tagging literature mentioned in the introduction are
  evidence of that. Therefore, a comparative study of these
  segmentation methods together with the necessary formulae
  might clarify the picture and help practitioners.

  \item[-] To introduce the segmentation EM method and study its
performance in comparison to the Bayesian segmentation methods mentioned above.
Segmentation EM is a standard EM method where the path $y$ is
considered as the main parameter of interest and $\theta$ as a nuisance
parameter. Application of the segmentation EM algorithm depends very
much on the particular model studied. For example, to apply it to the so-called
triplet Markov models, see e.g.~\cite{Gorynin,Courbot,Lanchantin}, some additional assumptions are
needed. Similarly it is not clear how to apply segmentation EM
 in HMMs with infinite state spaces (hierarchical Dirichlet processes). Thus our main message
 concerning the segmentation EM algorithm is that in the case of HMMs the
segmentation EM approach is applicable (at least for the priors
considered in the paper) and works well. Among all the
non-stochastic methods we consider, segmentation
EM is the only one that iteratively maximizes $p(y|x)$, and therefore it is theoretically justified and recommended.

\item[-] To compare the performance of all the methods and to show that in Bayesian segmentation, the best non-stochastic iterative methods perform at least as well as MCMC methods such as simulated
annealing, while at the same time being computationally faster and less demanding.
\end{itemize}
%
\section{Bayesian segmentation methods}\label{methods}
\subsection{Segmentation EM}
Since our goal is to find a state sequence that maximizes $p(y|x)$,
the main parameter of interest is the hidden path rather than the
model parameters $\theta$. Therefore it is natural to change their
roles in the EM procedure in order to maximize $p(y|x)$. {Indeed --
in the traditional Bayesian EM approach the objective is to maximize
the posterior probability of parameters $p(\theta|x)$, and then $y$
is considered as the latent (or nuisance) parameter and integrated
out. In our setup, the objective is to maximize $p(y|x)$, thus
$\theta$ is considered as the latent (nuisance) parameter and
integrated out.}

We start with an initial sequence $y^{(0)}$ and then update  the
state sequences according to the following rule:
\begin{equation}\label{sQ}
y^{(i+1)}=\arg\max_y \int \ln p(y,\theta|x)p( \theta|y^{(i)},x) d
\theta=\arg\max_y \int \ln p(y,x| \theta)p( \theta|y^{(i)},x)  d
\theta.
\end{equation}
Every iteration step increases the probability $p(y|x)$ and the algorithm stops when there are no further
changes in the estimated state sequence. We call this procedure {\it segmentation EM}, the output is denoted by $\hat{v}_{\rm{sEM}}$.
\begin{lemma} Every iteration step in the segmentation EM procedure increases the posterior probability: $p(y^{(i+1)}|x)\geq p(y^{(i)}|x)$. Furthermore, the
objective function in (\ref{sQ}) can be maximized with the Viterbi
algorithm by considering the matrix $(u_{lj}^{(i)})$ and the functions $h_k^{(i)}$ as the
transition and emission parameters, where
\begin{equation}\label{seg-EM-V}
u_{lj}^{(i)}:=\exp\big[\int  \ln
p_{lj}(\theta_{tr}) p(\theta_{tr}|y^{(i)})d
\theta_{tr}\big],\quad h_k^{(i)}(x_t):=\exp\big[\int \ln
f_k(x_t|\theta_{em}^k) p(\theta^k_{em}|y^{(i)},x)d
\theta^k_{em}\big].\end{equation}
\end{lemma}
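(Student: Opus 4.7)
The plan is to treat both parts as familiar calculations adapted to the role reversal: here $y$ plays the role of the parameter of interest and $\theta$ the role of the latent variable. Part 1 is the standard EM monotonicity argument transposed to this setting; Part 2 is a direct computation that uses the factorizations (\ref{fact}), (\ref{indt}), (\ref{factem}) to turn the expected log-likelihood into a function that factorizes over time and depends on $y$ only through initial, transition and emission counts.

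For Part 1, I would write, for any $y$,
\[
\ln p(y\mid x)=\int \ln p(y,\theta\mid x)\,p(\theta\mid y^{(i)},x)\,d\theta-\int \ln p(\theta\mid y,x)\,p(\theta\mid y^{(i)},x)\,d\theta,
\]
using that $\ln p(y\mid x)$ is constant in $\theta$ and that the second density integrates to $1$. Denote the two terms by $Q(y\mid y^{(i)})$ and $H(y\mid y^{(i)})$. By the definition of $y^{(i+1)}$ in (\ref{sQ}), $Q(y^{(i+1)}\mid y^{(i)})\ge Q(y^{(i)}\mid y^{(i)})$. For the second term, Gibbs' inequality gives
\[
H(y^{(i)}\mid y^{(i)})-H(y\mid y^{(i)})=\int \ln\frac{p(\theta\mid y^{(i)},x)}{p(\theta\mid y,x)}\,p(\theta\mid y^{(i)},x)\,d\theta\ge 0,
\]
so $-H(y^{(i+1)}\mid y^{(i)})+H(y^{(i)}\mid y^{(i)})\ge 0$. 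Adding the two inequalities yields $\ln p(y^{(i+1)}\mid x)\ge \ln p(y^{(i)}\mid x)$. Note that in (\ref{sQ}) I can replace $p(y,\theta\mid x)$ by $p(y,x\mid \theta)$ because they differ by the factor $\pi(\theta)/p(x)$ whose logarithm, once integrated against $p(\theta\mid y^{(i)},x)$, is a constant in $y$ and so irrelevant to the arg max.

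For Part 2, I would substitute the joint likelihood from (\ref{fact}) and take logs:
\[
\ln p(y,x\mid \theta)=\ln p_{0y_1}+\sum_{l,j}n_{lj}(y)\ln p_{lj}(\theta_{tr})+\sum_{k=1}^K\sum_{t:\,y_t=k}\ln f_k(x_t\mid \theta_{em}^k).
\]
Integrating against $p(\theta\mid y^{(i)},x)$ and using the posterior factorizations (\ref{indt}) and (\ref{factem}), each integral reduces to a one-parameter integral against $p(\theta_{tr}\mid y^{(i)})$ or $p(\theta_{em}^k\mid y^{(i)},x)$. Recognising the resulting constants as $\ln u_{lj}^{(i)}$ and $\ln h_k^{(i)}(x_t)$ from (\ref{seg-EM-V}) and reassembling $\sum_{l,j}n_{lj}(y)\ln u_{lj}^{(i)}=\sum_{t=1}^{n-1}\ln u_{y_t y_{t+1}}^{(i)}$, the objective becomes
\[
\ln p_{0y_1}+\sum_{t=1}^{n-1}\ln u_{y_t y_{t+1}}^{(i)}+\sum_{t=1}^n \ln h_{y_t}^{(i)}(x_t),
\]
which is precisely the log of the Viterbi score for an HMM with initial probabilities $p_{0k}$, transition weights $u_{lj}^{(i)}$ and emission weights $h_k^{(i)}$. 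Since the Viterbi recursion is a pure maximum-product algorithm, it does not require the $u_{lj}^{(i)}$ to sum to one over $j$ or the $h_k^{(i)}$ to be normalised densities; any positive weights work, so it directly delivers $y^{(i+1)}$.

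The only real subtlety—and the main thing worth checking carefully—is that the M-step in (\ref{sQ}) and the equivalent arg max giving Viterbi do not pick up forgotten $y$-dependent terms. Concretely, I need to verify that $\int \ln \pi(\theta)\,p(\theta\mid y^{(i)},x)\,d\theta$ and $\int \ln p(x\mid \theta_{em})$-type remainders are independent of $y$ (they are, since they have already been absorbed into the normalising constant $p(x)$, or appear only through the $y$-indexed products above). Once that bookkeeping is in place, both claims follow.
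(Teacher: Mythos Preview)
Your proposal is correct and follows essentially the same route as the paper: the monotonicity is the standard EM argument with the roles of $y$ and $\theta$ swapped (you spell out the $Q$/$H$ decomposition and Gibbs' inequality, whereas the paper simply cites the well-known EM result), and the Viterbi applicability is obtained by exactly the same expansion of $\int \ln p(y,x\mid\theta)\,p(\theta\mid y^{(i)},x)\,d\theta$ using (\ref{fact}), (\ref{indt}), (\ref{factem}). The paper justifies running Viterbi on unnormalised weights via the optimality principle, which is equivalent to your observation that the max-product recursion needs only positive weights; your closing remark about possible $y$-dependent remainders is unnecessary, since after writing $\ln p(y,x\mid\theta)$ in the displayed form there are no further terms to worry about.
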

\noindent Observe that $u_{lj}^{(i)}$ and $h_k^{(i)}$ in (\ref{seg-EM-V}) depend on iteration $i$. For the sake of simplicity we will suppress $(i)$ in the notation and write $u_{lj}$ and $h_k$ in the rest of this subsection.
\begin{proof} It is well known that the standard EM algorithm
increases the likelihood at every iteration step (see, e.g.
\cite{EMbook,HMMbook}). Change the roles of $\theta$ and $y$ to
obtain $p(y^{(i+1)}|x)\geq p(y^{(i)}|x)$.
To see that $y^{(i+1)}$ can be found with the Viterbi algorithm, note that by
(\ref{likef}), (\ref{indt}) and (\ref{factem}) we have
\begin{align*}
&\int \ln p(y,x| \theta)p( \theta|y^{(i)},x)  d \theta =\int \ln
p(y|\theta_{tr})p( \theta_{tr}|y^{(i)})  d \theta_{tr}+\int \ln
p(x|y,\theta_{em}) p( \theta_{em}|y^{(i)},x) d \theta_{em}\\\notag
&=\ln p_{0y_1}+\sum_{l,j\in S}n_{lj}(y)\int \ln
p_{lj}(\theta_{tr})p( \theta_{tr}|y^{(i)}) d
\theta_{tr}+\sum_{k=1}^K \sum_{t: y_t=k}\int \ln
f_k(x_t|\theta^k_{em})p(\theta^k_{em}|y^{(i)},x)d\theta^k_{em}\\\label{u}
&=\ln p_{0y_1}+\sum_{l,j\in S}n_{lj}(y) \ln u_{lj}+\sum_{k=1}^K
\sum_{t: y_t=k} \ln h_k(x_t).\end{align*}
Thus, the objective
function is in the form of $\ln p(y,x)$ of an HMM with `transition
matrix' $(u_{lj})$ and `emission densities' $h_k$. By Jensen's
inequality we know that the rows of $(u_{lj})$
do not sum up to one, thus $(u_{lj})$ is not a
transition matrix. Similarly, the functions $h_k$ do not integrate
to one, thus the functions $h_k$ are not probability densities.
However, the Viterbi algorithm can still be applied to find the path
with maximum probability. To see that, note first that the functions
$h_k$ enter into the Viterbi algorithm only via values $h_k(x_t)$, so
it really does not matter whether they integrate to one or not.
Similarly, the optimality principle -- if a maximum probability path
passes state $k$ at time $m$, the first $m$ elements of that
path  must form a maximum likelihood path  amongst those paths  that
end in state $k$ at time $m$ -- does not depend on whether the
probabilities sum up to one or not. If the optimality principle holds,
then the Viterbi algorithm as a dynamic programming algorithm finds the
maximum probability path.
\end{proof}
The fact that the Viterbi algorithm can be applied for
maximizing $\ln p(y,x)$ makes the segmentation EM possible as soon
as $(u_{lj})$ and $h_k(x_t)$ can be calculated. In our case with Dirichlet priors for the transition parameters the
posterior measure $p(\theta_{tr}|y)$ is the product of the row
posteriors, and the posterior of the $l$-th row is ${\text
Dir}(\alpha_{l1}+n_{l1}(y),\ldots,\alpha_{lK}+n_{lK}(y)).$ Then
$$p_{lj}\sim {\rm Be}\big(\alpha_{lj}+n_{lj}(y),
\alpha_l+n_l(y)-\alpha_{lj}-n_{lj}(y)\big).$$ It is known that
when $X\sim {\rm Be}(\alpha,\beta)$, then
$E(\ln X)=\psi(\alpha)-\psi(\alpha+\beta),$ where $\psi$ is the digamma
function. Thus, for any sequence $y$, the quantities $u_{lj}$ can be
calculated with the following formula:
\begin{equation}\label{uij}
\ln u_{lj}(y)=\int \ln
p_{lj}(\theta_{tr})p(\theta_{tr}|y)d\theta_{tr}=\psi(\alpha_{lj}+n_{lj}(y))-\psi(\alpha_l+n_l(y)).\end{equation}
Computing $h_k$ depends on the family of emission densities.
If emission distributions belong to an exponential family, that is
$$f(x|\theta_{em})=\exp[\theta_{em}^t  T(x)+A(\theta_{em})+B(x)],$$
then calculation of $h_k(x)$ reduces to evaluating the moments
$$\int \theta_{em}  p(\theta_{em}|y,x) d\theta_{em},\quad \int A(\theta_{em})
p(\theta_{em}|y,x)d\theta_{em}.$$ For conjugate priors, this kind of
integration is often feasible.
%
%
\subsection{Other segmentation methods}\label{sec:MM}
\paragraph{Segmentation MM.}
\noindent The segmentation MM algorithm is just like the
segmentation EM algorithm, except that the expectation step is
replaced by the maximization step. We start with an initial
path $y^{(0)}$.  Then, given $y^{(i)}$, find
 $$\theta^{(i+1)}=\arg\max_{\theta} p(\theta|y^{(i)},x),\quad y^{(i+1)}= \arg\max_y
p(y|\theta^{(i+1)}, x).$$  The algorithm converges when there are no changes in the two consecutive path estimates.
Every iteration step increases the joint likelihood, that is
$$p(y^{(i+1)},\theta^{(i+1)}|x)\geq p(y^{(i)},\theta^{(i+1)}|x)\geq
p(y^{(i)},\theta^{(i)}|x),$$ but the objective function $p(y|x)$ is
not guaranteed to increase. In the context of parameter estimation
in the non-Bayesian setting (that is when the prior is non-informative) this
algorithm is sometimes called the {\it Viterbi training}
\cite{intech,AVT1,AVT3,AVT4} or {\it classification EM}
\cite{clEM1,clEM2}. It should move on faster than segmentation EM.
The advantage of the segmentation MM procedure over the segmentation EM
procedure is that it does not require calculation of $u_{lj}$
and $h_k$. For given $\theta^{(i)}$ the path $y^{(i)}$ can be
found by the standard Viterbi algorithm, and $\theta^{(i+1)}$ is just the
posterior mode;  in our case the mode
for the emission and transition parameters can be calculated separately due to independence:
$\theta^{(i+1)}=(\theta_{tr}^{(i+1)},\theta_{em}^{(i+1)})$, where
$$\theta_{tr}^{(i+1)}=\arg\max_{\theta_{tr}}p(\theta_{tr}|y^{(i)}),\quad
\theta_{em}^{(i+1)}=\arg\max_{\theta_{em}}p(\theta_{em}|y^{(i)},x).$$
\paragraph{Bayesian EM.}
\noindent The {\it parameters-first} approach in segmentation consists
of estimating the unknown model parameters first and then performing segmentation.
The most common parameter estimate in the Bayesian setup is the {\it MAP estimate} defined as
$$\hat{\theta}=\arg\max_{\theta}p(\theta|x)=\arg\max_{\theta}p(x|\theta)\pi(\theta).$$
The standard method for calculating
$\hat{\theta}$  is the EM algorithm  \cite{EMbook,koski}. The EM procedure in the Bayesian setup starts with an initial
parameter $\theta^{(0)}$ and updates the parameters iteratively as follows:
\begin{equation}\label{parEM}
\theta^{(i+1)}=\arg\max_{\theta} \sum_y \ln p(y,\theta|x)
p(y|\theta^{(i)},x)=\arg\max_{\theta}\Big[ \sum_y \ln
p(y,x|\theta)p(y|\theta^{(i)},x)+\ln \pi(\theta)\Big].\end{equation}
Every iteration increases the posterior probability, that is
$p(\theta^{(i+1)}|x)\geq p(\theta^{(i)}|x).$ We call this estimation
procedure {\it Bayesian EM} and denote the resulting
parameter estimate by $\hat{\theta}_{\rm{B(EM)}}$. The EM procedure
in the non-Bayesian setup is the same, except that $\ln \pi(\theta)$ is missing on the right hand
side of (\ref{parEM}). This procedure
will be called {\it standard EM} and the output of the procedure
will be denoted by $\hat{\theta}_{\rm{EM}}$. Thus, the
standard EM algorithm can be considered as a special case of the
Bayesian EM algorithm with a non-informative prior ($\ln
\pi(\theta)=const$). In the case of Dirichlet transition
priors, noninformative priors correspond to the case
$\alpha_{lj}=1$. The Viterbi path estimates $\hat{v}_{\rm{B(EM)}}$ and $\hat{v}_{\rm{EM}}$ are obtained by
applying the Viterbi algorithm  with the respective parameter estimates:
$\hat{v}_{\rm{B(EM)}}:=\arg\max_y p(y|x,\hat{\theta}_{\rm{B(EM)}})$, $\hat{v}_{\rm{EM}}:=\arg\max_y p(y|x,\hat{\theta}_{\rm{EM}})$.
%
\paragraph{Variational Bayes approach.}
\noindent The idea behind the variational Bayes (VB)
 approach (see, e.g. \cite{watanabe,quinn,VBtutorial,VBtutorial2,beal, beal2, bishop})  is to approximate the posterior $p(\theta,y|x)$
with a product $\hQt(\theta)\hQy(y)$, where $\hQt$ and $\hQy$ are
probability measures on the parameter space and $S^n$ that
minimize the Kullback-Leibler divergence $D\big(\Qt\times
\Qy||p(\theta,y|x)\big)$ over all product measures $\Qt\times \Qy$, that is
$$\hQt\times \hQy =\arg\inf_{\Qt\times \Qy} D\big(\Qt\times
\Qy||p(\theta,y|x)\big).$$ It is known that the measures $\hQt$ and $\hQy$ satisfy
the equations
\begin{align*}
\ln \hQt(\theta)&=c_1+\int \ln  p(\theta,y|x) \, \hQy(dy)=c_1+\sum_y  \ln  p(\theta,y|x) \hQy(y),\\
\ln \hQy(y)&=c_2+\int  \ln p(\theta,y|x) \, \hQt(d\theta),
\end{align*}
where $c_1$ and $c_2$ are constants. This suggests the following
iterative algorithm for calculating $\hQt(\theta)$ and $\hQy(y)$.
Start with an initial sequence $y^{(0)}$ and take
$\Qy^{(0)}=\delta_{y^{(0)}}$. Given $\Qy^{(i)}$, update the measures as
\begin{align*}
\ln \Qt^{(i+1)}(\theta)&=c_1^{(i+1)}+\sum_y  \ln p(\theta,y|x)
 \Qy^{(i)}(y),\\
\ln \Qy^{(i+1)}(y)&=c_2^{(i+1)}+\int   \ln p(\theta,y|x)
\Qt^{(i+1)}(d\theta).
\end{align*}
In \cite{beal}, the algorithm is called {\it variational Bayes EM}
and it is argued (Theorem 2.1) that it decreases the Kullback-Leibler divergence in the following sense:
\[D\big(\Qt^{(i)}\times \Qy^{(i)}||p(\theta,y|x) \big)\geq D\big(\Qt^{(i+1)}\times \Qy^{(i)}||p(\theta,y|x)\big)\geq D\big(\Qt^{(i+1)}\times
\Qy^{(i+1)}||p(\theta,y|x) \big).\]
Suppose the VB algorithm described above has converged and its final output is  $\hQt\times \hQy$.
Then $\hQy$ is taken as the approximation of $p(y|x)$ and the Viterbi path estimate $\hat{v}_{\rm{VB}}$ is obtained as
$\hat{v}_{\rm{VB}}:=\arg\max_y \hQy(y)$.

Applying the variational Bayes method for estimating the Viterbi path is certainly not a trivial task. All the formulae needed for
updating $\Qy^{(i+1)}$ and $\Qt^{(i+1)}$ with explanations about technical details are presented in the Appendix.
%
%
%
\paragraph{Simulated annealing.}
\noindent Let $1\leq \beta_1<\beta_2\ldots <\beta_r$ be a {cooling
schedule}. Since direct sampling from distribution
$p_{\beta}(y|x)\propto p^{\beta}(y|x)$ is not possible, for every
$\beta$ we sample
$y_{\beta}^{(1)},\theta_{\beta}^{(1)},y_{\beta}^{(2)},\theta_{\beta}^{(2)},\ldots,y_{\beta}^{(n_{\beta})}$
alternately from a probability measure $p_{\beta}(\theta,y|x)\propto
p(\theta,y |x)^{\beta}$ in the acceptance-rejection sense as
follows. For given $\beta$ and path $y^{(i)}$, generate the
parameter $\theta^{(i)}$ from the distribution
$p_{\beta}(\theta|y^{(i)},x)\propto p(\theta|y^{(i)},x)^{\beta}$.
Then, given $\theta^{(i)}$, generate a path $y$ from
$p_{\beta}(y|\theta^{(i)},x)\propto p(y|\theta^{(i)},x)^{\beta}$.
The generated path $y$ will be accepted as $y^{(i+1)}$ with the
probability
\begin{align*}
 \frac{  p( y|x)^{\beta} /[ p_{\beta}( y | \theta^{(i)},x ) p_{\beta}( \theta^{(i)}  | y^{(i)},x)]}
  {  p( y^{(i)}|x)^{\beta}/[p_{\beta}( y^{(i)} | \theta^{(i)},x ) p_{\beta}( \theta^{(i)}  | y,x)]} \wedge 1
 = \frac{p(y|x)^{\beta}/p_{\beta}(y,x)}{p(y^{(i)}|x)^{\beta}/p_{\beta}(y^{(i)},x)}\wedge 1,
\end{align*}
where $J_{\beta}(y|y^{(i)})=p_{\beta}( y | \theta^{(i)},x ) p_{\beta}( \theta^{(i)}  | y^{(i)},x)$ is the proposal distribution and
$p_{\beta}(y,x)\propto \int p(y,x|\theta)^{\beta}\pi(\theta)^{\beta}d\theta$. Note that the
ratio actually does not depend on $\theta^{(i)}$. If the candidate path $y$ is not accepted, then a new
 parameter $\theta^{(i)}$ from $p_{\beta}(\theta|y^{(i)},x)$ and a new path $y$ from the distribution
 $p_{\beta}(y|\theta^{(i)},x)$ will be generated.
At the end of the sampling, the path with highest probability is
found:
$$\hat{v}_{\rm SA}:=\arg\max_{k=1,\ldots,r;\, i=1,\ldots,n_{\beta_k}} p(y_{\beta_k}^{(i)}|x).$$
\paragraph{Iterative conditional mode algorithm.}
\noindent As already mentioned, sampling from $p(y|x)$ is in
general not possible even if the model is simple. Since
for any path $y$ the probability $p(y|x)$ can be found, then also for any site $t$ the probability $p_t(y_t|y_{-t},x)$ can
be calculated, where  $p_t(y_t|y_{-t},x)$ stands for the probability of
observing $y_t$ at site $t$ given the rest of the sequence and $x$.
Note that because $p(y|x)$ is not a Markov measure, $p_t(y_t|y_{-t},x)$
is not necessarily the same as $p_t(y_t|y_{t-1},y_{t+1},x)$. The
{\it iterative conditional mode (ICM)} updates paths iteratively as follows. It starts from a sequence $y^{(0)}$. To obtain $y^{(i+1)}$, the sequence $y^{(i)}$ is updated site-by-site
by the following rule:
$$ y_t^{(i+1)}=\arg\max_{k\in S} p_t(k |y^{(i+1)}_1,\ldots,
y^{(i+1)}_{t-1},y^{(i)}_{t+1},\ldots,y^{(i)}_n,x).$$ Thus, the ICM algorithm acts similarly to single site sampling (\cite{HMMbook}, \cite{johnson2}), but instead of generating a random state,
at every step it picks
 a state with maximum probability. In
\cite{corander}, the ICM algorithm is used under the name `greedy
algorithm'. It is indeed greedy in the sense that the  update of
every site increases the probability $p(y|x)$. The ICM algorithm
converges when no further changes occur in the estimated sequence;
the output will be denoted by $\hat{v}_{\text{ICM}}$.

It is well known from the theory of simulated annealing that such a greedy update can cause the output to be trapped in a local
maximum (see, e.g. \cite{winkler}), and our numerical examples confirm that. However, since \cite{corander} is one of the
few papers that considers segmentation in the Bayesian framework by non-stochastic methods, we include this method in our study.
%
%
\section{Numerical examples} \label{simulations}
To illustrate the behaviour of the segmentation methods described in Section 2, we will present the results of two examples. In the first example we study the case with known emission distributions and transition probabilities following Dirichlet priors. Thus, $\theta=\theta_{tr}$, $\pi=\pi_{tr}$ and under $\pi$, the rows of
the transition matrix are independent with the $l$-th row having a Dirichlet distribution ${\rm Dir}(\alpha_{l1},\cdots,\alpha_{l4})$. In the second example emission parameters also are assumed to be
unknown and normal emissions with conjugate priors are studied. Since the estimation criterion is $\arg\max_y p(y|x)$, the main measure of goodness is $p(y|x)$ or equivalently, $\ln p(y,x)$. We will also study how the methods perform in regard to initial state sequences and how the estimated state paths depend on different sets of prior parameters.

\subsection{General framework}
The data is generated from an HMM with four underlying states, thus
$S=\{1,2,3,4\}$. The emission distributions are normal with common
variance $\sigma^2=0.25$, the emission distribution corresponding to
state $k$ is ${\cal N}(\mu_k,\sigma^2)$ with $\mu_1=-0.7$,
$\mu_2=0$, $\mu_3=0.7$ and $\mu_4=1.4$, respectively. The transition
matrix is given by $\mathbb{P}=(p_{lj})$ with $p_{ll}=0.6$,
$l=1,\ldots,4$, and $p_{lj}=0.4/3$, otherwise. The initial
distribution $(p_{0k})$ is given by $p_{0k}={0.25}$, $k=1,2,3,4$. The length
of the generated data sequence $x$ is $n=600$.
%
\paragraph{Hyperparameters.} {Recall that we use the parametrization $\alpha_{lj}=Mq_{lj}$, $l,j=1,\ldots,4$, where
$Q=\big(q_{lj}\big)$ is a transition matrix and $M>0$ the precision
parameter.}  We will consider three $Q$-matrices:
\[ Q_1=(q_{lj}) \,\, \mbox{with} \,\, q_{lj}=0.25 \,\, \forall l,j; \quad Q_2=\mathbb{P}; \quad Q_3=(q_{lj}) \,\, \mbox{with} \,\, q_{ll}=0.4, \,\, q_{lj}=0.2 \,\, \mbox{for} \,\, l\neq j.  \]
Thus, the combination $Q_1$ and $M=4$  corresponds to uniform priors
on transition parameters, and $Q_1$ together with very large $M$ puts
a uniform prior $p(y)$ on sequences (see also Section 4.1). The matrices $Q_2$ and $Q_3$
favour sequences with long blocks; the smaller $M$ is, the more such
behavior is pronounced. To explain our choices of $M$ in
simulations, let us give some intuition about the role of $M$ in
some procedures. First, the Bayesian EM updates (\ref{BEMsolution})
for this parametrization are given by
\begin{align}\label{solution3EM}
{p}^{(i+1)}_{lj}&={\xi^{(i)}(l,j)+(Mq_{lj}-1)\over \sum_j
\xi^{(i)}(l,j)+(M-K)},\end{align}
where $\xi^{(i)}(l,j)$ is the expected number
of transitions from state $l$ to $j$ at iteration $i$, which varies between $0$ and $n-1$. If all
transitions are equally likely, with our $n=600$ it is
approximatively of order 37. If $M$ is much larger than $n$, then the influence of data in (\ref{solution3EM}) is
negligible and the output of the procedure is very close to $Q$. On
the other hand, a necessary condition in (\ref{solution3EM}) is that
$Mq_{lj}> 1$, which gives a lower bound to $M$.
A similar argument holds for segmentation EM. Since for any integer $n$ large enough (see e.g.~\cite{johnson}),
$\psi(n) \approx \ln(n-0.5)$, where $\gamma\approx 0.577$,
we can for large $n\ll m$ use the approximation
$\psi(m)-\psi(n) \approx \ln(m-0.5)-\ln(n-0.5)$.
Disregarding the fact that $Mq_{lj}$ might
not be an integer, (\ref{uij}) gives that for a given state sequence $y$,
\[ u_{lj}(y)\approx {Mq_{lj}+n_{lj}(y)-0.5\over M+n_l(y)-0.5}.\]
If $M$ is very small in comparison to $n_{lj}$, then $u_{lj}\approx
{n_{lj}(y)\over n_{l}(y)}$ and the segmentation EM algorithm
is practically the same as the segmentation MM algorithm. If on the
other hand $M$ is too big, then the data are negligible and the
output is close to the Viterbi path with $Q$.
Based on these arguments, we consider the following constants $M$: 600, 150, 50, 10, 5.
Observe that segmentation MM and Bayesian EM are applicable when $Mq_{lj}> 1$, which is restrictive when hyperparameters
$\alpha_{lj}\leq 1$ are of interest.

\paragraph{Initial sequences.} Since the  non-stochastic
methods studied here depend on initial path values, the choice of
initial paths has an important role in our numerical examples. All
our procedures are designed to start with initial sequence, but a
closer inspection of formulae $(\ref{uij})$, (\ref{MM})  and
(\ref{gamma0}) reveals that when emission parameters are known and
transition probabilities have Dirichlet priors, then the
segmentation EM, segmentation MM, Bayesian EM and variational Bayes
algorithms actually depend on $y^{(0)}$ only through the frequency
matrix or empirical transition matrix $(n_{lj}(y^{(0)}))$. The only
deterministic algorithm that uses the full initial sequence as
information and not only its summary measure through the number of
empirical transitions is ICM. Therefore, it is expected that ICM is
more sensitive with respect to initial sequences, because there are
many more actual sequences than frequency matrices. For MCMC methods
such as simulated annealing the initial value does not matter,
because the number of sweeps is typically large.

Since our goal is to find the global maximum of $p(y|x)$ and the
output of a method depends typically on the frequency matrix of the
initial sequence, we try to choose initial sequences so that the
corresponding frequency matrices will be different. Theoretically we
would somehow like to cover the whole space of transition matrices.
In the simplest case -- that is,  for a two-state model -- we could for
example choose transition matrices as {follows:
\begin{align*}
 \left(
              \begin{array}{cc}
                p & 1-p \\
                1-q & q \\
              \end{array}
            \right), \quad \mbox{where} \quad p,q\in \{ 0.25,0.5,0.75 \}. \end{align*}
This would provide us with nine different transition matrices which could then be used to generate random sequences as
realizations of a Markov chain with initial distribution being the
stationary one. In the case of four states applying the described
approach becomes more complicated. Therefore, in our examples we
have considered 15 transition matrices $B_1,\ldots,B_{15}$ for
generating initial sequences, which are obtained as follows. The
first three matrices are just our $Q_1$, $Q_2$ and $Q_3$. The rest,
$B_{4},\ldots,B_{15}$, have been randomly generated: each
row of $B_l$ has been independently generated from ${\rm Dir}(\alpha,\alpha,\alpha,\alpha)$, where the following 12
constants $\alpha=(0.3,0.5,0.7,0.8,0.9,1,1.1,1.2,1.3,1.5,1.7,1.9)$
have been used. From each matrix we have generated three random sequences as realizations of a Markov chain with initial
distribution being the stationary one. We study also
the initial sequence $y^{(0)}$ that corresponds to maximizing emissions pointwise, that is
\[ y^{(0)}_t=\arg\max_{k=1,2,3,4}f_k(x_t), \quad t=1,\ldots,n,\]
which was suggested in \cite{corander}. For given $Q$, a good
candidate for initial path is always the Viterbi path obtained using
$Q$, therefore the last initial sequence considered is the Viterbi
path obtained with the transition matrix $Q$. Thus all together we
have studied 47 initial sequences. Given a set of hyperparameters
and a non-stochastic iterative method, every initial sequence
produces an output sequence. The maximum number of different output
sequences is 47. The smaller that number, the more {\it robust} or
{\it less sensitive} with respect to initial sequences the method
is. The end result of the method is given by the best output
sequence, i.e.~the one that has the largest log-likelihood $\ln
p(y,x)$. {Since the emission densities $f_k$, $k=1,\ldots,K$, are
fixed, the optimality criterion  $\ln p(y,x)$ can for a given state
path $y$ be calculated as
\begin{equation}\label{optcr}
\ln p(x,y)=\ln p(y)+\ln p(x|y)=\ln p(y)+\sum_{t=1}^n \ln f_{y_t}(x_t),
\end{equation}
where $\ln p(y)$ is as in (\ref{p(y)}).}
\subsection{Example 1: fixed emission distributions} \label{Imodel}
The main purpose of the first example is to compare the general
performance of the  algorithms. All non-stochastic methods
(segmentation EM, segmentation MM, VB, ICM, Bayesian EM, standard
EM) were run with all 47 initial sequences, whereas for simulated
annealing one initial sequence was used. In the case of segmentation
EM, segmentation MM and ICM the algorithm stopped when there were no
further changes in the estimated state sequence. In simulated
annealing a cooling schedule with inverse temperatures equally
spaced in the range [1,10.2] was used, where for every inverse
temperature 15 paths were generated. Observe that for $Q_2$ and
$Q_3$ segmentation MM and Bayesian EM were not applicable for $M=5$
($Mq_{lj}>1$ is not fulfilled for all $q_{lj}$), therefore the
respective cell values of the tables summarizing the results for
different methods are `na'.

In Table \ref{ex1logpyx}, for every method the best log-likelihood
value $\ln p(\hat{v}, x)$ over the outcomes corresponding to 47
initial sequences is presented, where $\hat{v}$ denotes the
best output sequence for the corresponding method. The number in the brackets gives the number of different
outputs out of 47 possible. The best
log-likelihood value over all the methods for each set of
hyperparameters is given in bold.
As the table shows, the best results are generally
obtained by segmentation EM and segmentation MM. The
results for those methods differ for five sets of hyperparameters,
and then sometimes the segmentation EM performs slightly better and
sometimes the other way around. The similarity of the segmentation
EM and MM methods is explained in Section \ref{disc}.
In Table \ref{ex1logpyx}, we can also see that VB and Bayesian EM
behave quite similarly; this will also be clarified in Section \ref{disc}. Observe that
Bayesian EM is independent of initial sequence, while VB can result
in different path outcomes.  We can see that for EM-type methods the
number of different outputs (sensitivity) increases
when $M$ decreases and this makes sense, because a smaller $M$ means
that data has more influence. Notice that ICM is the most sensitive
among the studied methods, resulting in a different outcome for
basically every initial sequence. It can also be remarked that the
number of different outputs in the table for segmentation EM and
segmentation MM shows that the initial state sequences generated
from the same transition matrix result often in different output
sequences.

\begin{table}[h!]
\begin{center}
{\small
\begin{tabular}{| c | c || c| c | c | c | c | c | c |}
  \hline
   $Q$ & $M$ & sEM  & sMM & ICM & VB  & B(EM)   & EM &  SA  \\
  \hline\hline
 $Q_1$ & 600 & \textbf{-1071.76} (6)   & \textbf{-1071.76}  (6) & \textbf{-1071.76}   (11) & -1072.93 (2) & -1072.98 & -1127.27    (4)& -1072.02 \\
       & 150 &\textbf{-1017.57}   (30) & -1017.59   (27) & -1030.48   (46) & -1051.82 (4) &-1051.92 & -1038.00    (4) & -1031.63 \\
        & 50  & -940.46   (31) &  \textbf{-940.42}   (31) &  -971.08   (46) & -1019.84 (8) &-1019.07 &  -955.78    (4)&  -945.68 \\
        & 10  & \textbf{-861.81}   (38) &  -861.84   (37) &  -909.67   (46) & -932.17 (4)  & -924.27 & -878.13    (4) &  -865.10 \\
      & 5   & \textbf{-842.27}   (33) &  -842.30   (35) &  -901.81   (46) & -909.12 (1) & -899.53 & -863.33    (4) &  -860.88 \\
\hline
$Q_2$ & 600 & \textbf{-898.31}   (1) &  \textbf{-898.31}   (1)   & -898.78   (47) & -899.28 (1) & -899.28 & -927.24    (4) &  \textbf{-898.31} \\
       & 150 & \textbf{-882.51}    (8) &  \textbf{-882.51}    (7) &  -887.10   (47) & -888.32 (2) & -887.35 & -901.07    (4) &  -882.68 \\
       & 50  & \textbf{-862.31}   (18) &  \textbf{-862.31}   (19) &  -877.19   (47) & -878.32 (4) & -877.91 & -876.46    (4) &  -865.63 \\
       & 10  & \textbf{-831.71}   (32) &  \textbf{-831.71}   (36) &  -873.11   (47) & -871.52 (3) & -869.18 & -853.36    (4) &  -851.10 \\
      & 5   & \textbf{-825.07}   (36) &  na             &  -875.43   (47) & -870.10 (2) & na & -849.63    (4) &  -834.38 \\
\hline
$Q_3$ & 600 & \textbf{-985.91}    (6) &  \textbf{-985.91}    (6) &  -988.46   (47) & -989.51 (1) & -989.22 &-1010.59    (4) &  -985.91 \\
       & 150 & \textbf{-945.93}   (13) &  \textbf{-945.93}   (14) &  -961.60   (47) & -966.80 (4) & -966.80 & -964.31    (4) &  -946.58 \\
       & 50  & -901.42   (24) &  \textbf{-901.38}   (27) &  -936.78   (47) & -940.26 (2) & -938.31 & -913.28    (4) &  -905.46 \\
       & 10  & \textbf{-846.64}   (34) &  \textbf{-846.64}   (32) &  -901.30   (47) & -895.44 (1) & -892.72 & -865.09    (4) &  -865.05 \\
      & 5   & \textbf{-833.49}   (34) & na             & -895.74   (47) & -888.02 (4) & na & -855.95    (4) &  -843.05 \\
 \hline
\end{tabular} }
\end{center}
\caption{\label{ex1logpyx} The best log-likelihood value {$\ln
p(x,\hat{v})$ (calculated as in (\ref{optcr}))} obtained for every
method in Example 1. The best result(s) for every set of
hyperparameters is presented in bold. In the brackets, the number of
different output sequences out of 47 possible is given.}
\end{table}

The log-likelihood values in Table \ref{ex1logpyx} give a
general summary measure for comparing the best paths over the
 methods. To understand better how different these best paths
really are, we have counted the pointwise differences in comparison
to the best path and summarized these in Table \ref{ex1pointDiff}.
If the best state path over all the methods is $\hat{v}$ and the
best path for a method we want to compare it with is $y$, then the
sum of pointwise differences is given by $\sum_{t=1}^n I\{ \hat{v}_t
\ne y_t\}$. We can see that in the worst case, the path estimates
can differ from the best path in up to 1/3 of the path points, see
VB and Bayesian EM for $Q_1$ and $M=50$.
\begin{table}[h!]
\begin{center}
\begin{tabular}{| c | r || r| r | r | r | r |r|r |}
  \hline
   $Q$ & $M$ & sEM & sMM &  ICM & VB  & B(EM)  & EM & SA  \\
  \hline\hline
$Q_1$ & 600 &  0 &  0 &  0 & 19 &   20 & 167 & 10 \\
        & 150 &  0 &  5 & 75 & 178 &  177 & 133 & 51 \\
        & 50  &  6 &  0 &119 & 198 &  198 & 108 & 24 \\
        & 10  &  0 &  1 &135 & 158 &  150 & 79 & 114 \\
       & 5   &  0 &  2 &151 & 141 &  135 & 81 &  87 \\
\hline
$Q_2$ & 600 &  0 &  0 & 12 & 19 &   19 & 117 &  0 \\
        & 150 &  0 &  0 & 60 & 53 &   47 & 117 &  6 \\
        &  50 &  0 &  0 & 92 & 66 &   67 & 101 & 42 \\
        &  10 &  0 &  0 &170 & 113 &  110 & 81 & 86 \\
      & 5   &  0 & na &170 & 118 &  na & 81 & 28 \\
\hline
$Q_3$ & 600 &  0 & 0 & 38 & 41 &   37 & 90 &  9 \\
        & 150 &  0 & 0 & 85 & 92 &   92 & 82 & 15 \\
        & 50  &  7 & 0 &130 & 125 &  122 & 83 & 163 \\
        & 10  &  0 & 0 &160 & 138 &  132 & 80 & 44 \\
        & 5   &  0 & na & 151 & 129 & na &  81 & 54 \\
\hline
\end{tabular}
\end{center}
\caption{\label{ex1pointDiff} Comparison of the estimated state
sequences with the best Viterbi path estimate for each set of
hyperparameters in Example 1. The number of pointwise differences
compared to the best path estimate is presented.}
\end{table}

{Tables \ref{ex1logpyx} and \ref{ex1pointDiff} summarize the results
of different methods for a fixed observation sequence $x$.} To
observe the general behavior of the  algorithms, we have rerun these
simulations  for 20 different observation sequences. Different
observation sequences show a similar pattern to that in Table
\ref{ex1logpyx}. We now study how segmentation EM and segmentation
MM perform in comparison to simulated annealing. Table
\ref{n600sum20seq} summarizes the results for our 15 sets of
hyperparameters and 20 observation sequences. The counts in the
first half of the table (columns SA$_{max}$, sEM$_{max}$,
sMM$_{max}$) present for each set of hyperparameters the number of
best scores over 20 observation sequences. Each time a  method is
counted as best when it reaches the maximum log-likelihood for a
given set of hyperparameters. Thus, for example, if all the three
methods resulted in the same state path estimate, every method is
counted as the best or `winner'. In our example we can conclude that
EM-type methods perform better than simulated annealing, since
simulated annealing does not give the maximum log-likelihood value
as often as EM-type methods. The second part of the table (columns
SA$_{min}$, sEM$_{min}$, sMM$_{min}$) presents for each method and
for a given set of hyperparameters the count over 20 observation
sequences of when this method was strictly worse than the other two.
Now we want to identify a `loser', therefore we have counted how
many times the respective method performs worst of the three methods
according to the log-likelihood value. Here we can see that
simulated annealing gives the lowest log-likelihood value most
often. Thus, our example demonstrates that for a given cooling
schedule and given set of initial state paths, EM-type algorithms
perform actually better than simulated annealing.
%
%
%
\begin{table}[h!]
\begin{center}
\begin{tabular}{| c | c || c| c | c || c | c | c |}
  \hline
   $Q$ & $M$ & SA$_{max}$  & sEM$_{max}$ & sMM$_{max}$ & SA$_{min}$  & sEM$_{min}$   & sMM$_{min}$  \\
  \hline\hline
$Q_1$ & 600  &   0 &    19 &    20 &   20  &    0  &    0 \\
      & 150  &   0 &    18 &    15 &   20  &    0  &    0 \\
      &  50  &   0 &    16 &    16 &   20  &    0  &    0 \\
      &  10  &   1 &    14 &    14 &   18  &    1  &    0 \\
      &   5  &   5 &    11 &    12 &   14  &    2  &    2 \\
\hline
$Q_2$ & 600  &  16 &    20 &    20 &    4  &    0  &    0 \\
       &  150  &   1 &    18 &    20 &   19  &    0  &    0 \\
       &   50  &   2 &    16 &    15 &   18 &     1  &    1 \\
       &  10  &    4 &    11 &   17 &    15 &    2  &    0 \\
       &   5  &    4 &    16 &    na &   16 &     4  &   na \\
\hline
$Q_3$ & 600  &   6 &    17 &    18 &   13 &     1  &    0 \\
       &  150  &   0 &    19 &    20 &   20 &     0  &    0 \\
       &   50  &   1 &    17 &    16 &   19 &     0  &    0 \\
       &   10  &   5 &    13 &    11 &   14 &     1  &    4 \\
      &    5  &   3 &    17 &     na &   17 &     3  &   na \\
\hline
\end{tabular}
\end{center}
\caption{\label{n600sum20seq} The counts over the Viterbi path estimates corresponding to 20 different observation sequences showing when the simulated annealing, segmentation EM and
segmentation MM methods reached the maximum and minimum values of log-likelihood in Example 1.
The minimum count shows how many times the respective method performs worst of the three methods according to the log-likelihood value.}
\end{table}
\subsection{Example 2: priors on transition probabilities and emission parameters} \label{caseII}
In the second example we assume that  the parameters of emission
densities are also unknown. The transition probabilities are modeled with Dirichlet priors as before. For emissions we consider
normal distributions with conjugate prior distributions. The emission distribution corresponding to state $k$ is
$\mathcal N(\mu_k,\sigma_k^2)$, where prior distributions for $\mu_k$ and $\sigma_k^2$ are given by a normal and
inverse chi-square distribution respectively (also known as NIX priors):
$$\pi_{em}(\theta_{\rm em})=\prod_{k=1}^K\pi_{em}(\theta^k_{\rm em}),\quad
\pi_{em}(\theta^k_{\rm em})=\pi(\mu_k,\sigma_k)=\pi(\sigma^2_k)\pi(\mu_k|\sigma_k^2), $$
where
$$\mu_k |\sigma_k^2\sim \mathcal N \big(\xi_k,{\sigma_k^2\over \kappa_0}\big),\quad
\sigma^2_k\sim {\rm Inv-}\chi^2(\nu_{0},\tau^2_{0}).$$ Here
$\kappa_0$, $\nu_{0}$ and $\tau^2_{0}$ are hyperparameters that
might depend on $k$, but  in our example we assume they are  equal.
The calculations have been performed using the same 20 observation
sequences $x$ and the same 47 initial sequences $y^{(0)}$ as in
Example 1. We will also refer to this example as the Dirichlet-NIX
case. The necessary formulae and computational details about the
algorithms needed for the Dirichlet-NIX example can be found in the
Appendix. {As previously, for any path $y$ the universal optimality
criterion is  $\ln p(x,y)=\ln p(y)+\ln p(x|y)$, where $p(y)$ is
calculated by (\ref{p(y)}) and under a NIX-prior $\ln p(x|y)$ is
calculated as in (\ref{jaana}).}

In the Dirichlet-NIX example, the choice of emission hyperparameters
affects segmentation results strongly, {see also Subsection
\ref{clusters}}. The hyperparameters we consider are as follows:
$\xi=(-0.7,0,0.7,1.4)$, $\tau_0^2=0.25$, $\kappa_0=10$, $\nu_0=50$.
In simulated annealing a cooling schedule with inverse temperatures
equally spaced in the range $[1,21]$ was used, for every inverse
temperature 15 paths were generated. Again, for $Q_2$ and $Q_3$
segmentation MM and Bayesian EM were not applicable for $M=5$
($Mq_{lj}>1$ is not fulfilled for all $q_{lj}$), therefore the
respective cell values of the tables summarizing the results for
different methods are `na'.

Since we are very much interested in how much faster non-stochastic methods perform computationally in comparison to MCMC methods,
 we start with presenting in Table \ref{n600sum20seqEx2} a summary of the behaviour of log-likelihood values over the 20 sequences and our 15 sets of transition hyperparameters
{just as in Table \ref{n600sum20seq}}. The counts in Table \ref{n600sum20seqEx2} show that simulated annealing often gives
 the maximum log-likelihood value for $M=600$, otherwise
EM-type algorithms perform generally better.

In Table \ref{ex2logpyx}, the log-likelihood values $\ln p(\hat{v},x)$ of the best path estimates for each  method
are presented for the same observation sequence as in Table \ref{ex1logpyx}.
Again, the number of different outcome sequences out of 47 possible
can be seen in the brackets. In general Table \ref{ex2logpyx} shows
the same pattern as Table \ref{ex1logpyx}: the best methods are
segmentation MM and segmentation EM and they both outperform VB and
Bayesian EM. For this observation sequence, the log-likelihood values for segmentation MM are
slightly better  than those for
segmentation EM. But this is not a rule: the results for the other 19
observation sequences show that sometimes segmentation
EM is better, and sometimes the other way around.

The log-likelihood values in Table \ref{ex2logpyx} give again a general
summary measure for comparing the best paths over the studied
methods. Relatively small differences in log-likelihood values can incorporate large pointwise
differences in the respective sequences. For example, the log-likelihood values of the best state sequences for segmentation EM and segmentation MM when $Q=Q_2$ and
$M=50$ are -855.26 and -854.91, respectively. The pointwise difference between the state paths is 163, the transition frequency matrices are given by
\[ \left( \begin{array}{cccc}  44  &  1 &   2 &  12 \\ 0 & 333 &   0 &  16  \\  2 &   0 &   5  &  0 \\  14 &  15  &  0 & 155 \\ \end{array} \right),\quad \quad
 \left( \begin{array}{cccc}  37  &  3 &  0 &  4 \\ 2 & 485 &   0 &  6  \\  0 &   0 &   0  &  0 \\  5 &  6  &  0 & 51 \\ \end{array} \right).
\]
\begin{table}[h!]
\begin{center}
\begin{tabular}{| c | c || c| c | c || c | c | c |}
  \hline
   $Q$ & $M$ & SA$_{max}$  & sEM$_{max}$ & sMM$_{max}$ & SA$_{min}$  & sEM$_{min}$   & sMM$_{min}$  \\
  \hline\hline
$Q_1$  & 600 &    18  &   12  &   11  &   1   &   3  &    6 \\
       &  150 &   12  &    9  &    8  &   7   &   6  &    5 \\
       &   50  &    2  &   15  &   12 &   18  &    0  &    0 \\
       &   10  &    5  &   12  &   11 &   15  &    2  &    0  \\
       &    5  &     5 &    11 &    12  &  15  &    1 &     0 \\
\hline
$Q_2$  & 600 &    17  &    4   &   6  &   3   &   7    &  2  \\
       & 150   &  4   &  10  &   12   & 15   &   3   &   1   \\
       & 50   &   1   &  10   &  14  &  19   &   0   &   1  \\
       & 10   &   0   &   9   &  14  &  17   &   1   &   2   \\
       & 5     &  4   &  16   &   na   & 16   &   4   &   na   \\
\hline
$Q_3$  & 600   & 20  &  11   &  13  &   0   &   4   &   1   \\
       & 150    & 7   &  13   &  10  &  13   &   1   &   2  \\
       & 50   &   1   &  14   &  13  &  18   &   1    &  1   \\
       & 10   &   3   &  10  &   14  &  16   &   2   &  1   \\
       & 5     &  5   &  15  &    na  &  15   &   5   &   na  \\
\hline
\end{tabular}
\end{center}
\caption{\label{n600sum20seqEx2} The counts over the Viterbi path estimates corresponding to 20 different observation sequences showing when the simulated annealing,
segmentation EM and segmentation MM methods reached the maximum and minimum values of log-likelihood in Example 2.
The minimum count shows how many times the respective method performed worst of the three methods according to the log-likelihood value.}
\end{table}
\begin{table}[h!]
\begin{center}
{\small
\begin{tabular}{| c | c || c| c | c | c | c | c | c |}
  \hline
   $Q$ & $M$ & sEM  & sMM & ICM & VB  & B(EM)   & EM & SA  \\
  \hline\hline
$Q_1$ & 600 & -984.19 (34) & \textbf{-984.19} (33) & -984.19 (33) & -988.38 (8) & -988.33 & -1129.93 (35) & -984.19 \\
      & 150 & -964.23 (25) & -964.18 (25) & -964.17 (45) & -975.41 (4) & -974.82 & -1031.09 (35) & \textbf{-963.80} \\
      & 50  & -933.55 (22) & \textbf{-933.47} (23) & -938.50 (45) & -964.38 (2) & -964.28 & -950.33 (35)  & -936.15 \\
      & 10  & \textbf{-854.69} (20) & \textbf{-854.69} (24) & \textbf{-854.69} (46) & -915.08 (1) & -910.94 & -881.74 (35)  & -857.87 \\
      &  5  & \textbf{-839.89} (20) & \textbf{-839.89} (25) & \textbf{-839.89} (46) & -890.10 (9) & -887.64 & -869.43 (35)  & -860.79 \\
\hline
$Q_2$ & 600 & -891.57 (12) & -891.57 (10) & -895.73 (47) & -900.09 (1) & -898.46 & -927.46 (35) & \textbf{-891.53} \\
      & 150 & \textbf{-881.36} (15) & \textbf{-881.36} (17) & -884.86 (47) & -887.66 (1) & -887.26 & -900.19 (35) & -881.37 \\
      & 50  & -855.26 (14) & \textbf{-854.91} (16) & -875.71 (47) & -873.85 (1) & -874.64 & -877.69 (35) & -866.92 \\
      & 10  & -826.88 (20) & \textbf{-822.18} (29) & -857.62 (47) & -864.01 (1) & -858.24 & -859.59 (35) & -841.45 \\
      & 5   & \textbf{-818.95} (24) & na           & -841.14 (47) & -857.46 (1) & na & -857.55 (35) & -841.52 \\
\hline
$Q_3$ & 600 & -938.34 (22) & -938.12 (23) & -938.34 (47) & -954.98 (2) & -950.91 & -1014.17 (35) & \textbf{-938.08} \\
      & 150 & \textbf{-927.62} (17) & \textbf{-927.62} (18) & -935.46 (47) & -936.39 (4) & -936.62 & -958.13 (35) & -929.32 \\
      & 50  & -897.59 (20) & \textbf{-897.14} (18) & -908.92 (47) & -918.83 (2) & -919.05 & -910.96 (35) & -905.39 \\
      & 10  & -840.84 (21) & \textbf{-834.19} (24) & -857.11 (47) & -888.23 (4) & -883.58 & -869.77 (35) & -857.09 \\
      & 5   & \textbf{-826.57} (21) &  na          & -841.30 (47) & -873.73 (1) & na & -862.78 (35) & -849.91 \\
\hline
\end{tabular}}
\end{center}
\caption{\label{ex2logpyx} The best log-likelihood value {$\ln
p(x,\hat{v})$ (calculated with formulas (\ref{p(y)}) and
(\ref{jaana}))} for segmentation EM, segmentation MM, ICM,
variational Bayes, Bayesian EM, standard EM and simulated annealing
methods in Example 2.
 The best result(s) for every set of hyperparameters is presented in bold. The number of different output sequences out of 47 possible is given in the brackets.}
\end{table}
%
%
\section{The role of hyperparameters in Bayesian segmentation} \label{HyperRole}
In this section we will point out some important issues regarding the choice of hyperparameters which might be helpful also for interpretation of segmentation results.
\subsection{Dirichlet priors}
{\paragraph{Uniform Dirichlet priors.} Let us briefly discuss
the case when $\alpha_{lj}=1$ for every $l$ and $j$. Then the rows
of the transition matrix are uniformly distributed and
therefore, the priors with $\alpha_{lj}=1$ are considered to be
non-informative, which corresponds to not assuming
anything of the transition matrix. In other words, all transition matrices are equiprobable
and the expected values of all entries in the transition matrix are ${1\over K}$.
This might suggest that the same holds in the sequence space and no particular path
structure (like sequences with long blocks or rapid changes) is preferred. But this is not the case -- with uniform Dirichlet priors the state
sequences are far from being equiprobable and the ones having long
blocks are preferred. The following proposition proves that sequences with maximum
prior weight are the constant ones.
\begin{proposition}\label{py} Let $\alpha_{lj}=1$ for every $l,j$. Then
$$\arg\max_{y}p(y)=\{(i,\ldots,i),\quad i=\arg\max_i p_{0i}\}.$$
\end{proposition}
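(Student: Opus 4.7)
The plan is to substitute $\alpha_{lj}=1$ into formula (\ref{p(y)}), reorganise the expression, and then bound it by two independent combinatorial inequalities that are both saturated exactly by the constant sequences of the stated form.

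First, substituting $\alpha_{lj}=1$ (so $\alpha_l=K$, $\Gamma(\alpha_l)=(K-1)!$, $\Gamma(1)=1$) reduces (\ref{p(y)}) to
$$p(y)=p_{0y_1}\prod_l \frac{(K-1)!\,\prod_j n_{lj}(y)!}{(K-1+n_l(y))!}=p_{0y_1}\prod_l \frac{1}{\binom{K-1+n_l(y)}{K-1}}\cdot\frac{\prod_j n_{lj}(y)!}{n_l(y)!},$$
where I have used the identity $\binom{K-1+n_l(y)}{K-1}\,\prod_j n_{lj}(y)!\cdot n_l(y)!^{-1}\cdot n_l(y)!=(K-1+n_l(y))!/(K-1)!$ to split the product into two conceptually independent pieces.

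Second, I would bound the two resulting products separately. For each fixed $l$, the factor $\prod_j n_{lj}(y)!/n_l(y)!$ is the reciprocal of a multinomial coefficient and is thus at most $1$, with equality if and only if the $l$-th row $(n_{l1}(y),\ldots,n_{lK}(y))$ is concentrated on a single coordinate. For the other factor, I would set $g(N):=\log\binom{K-1+N}{K-1}$, note that its first differences $g(N+1)-g(N)=\log\frac{K+N}{N+1}$ are strictly decreasing in $N$ (so $g$ is strictly concave on $\mathbb{N}$), and conclude by a standard pairwise-swap (majorization) argument that, subject to $\sum_l n_l(y)=n-1$, the sum $\sum_l g(n_l(y))$ attains its unique minimum when one $n_l$ equals $n-1$ and the others equal $0$, i.e., when $\prod_l \binom{K-1+n_l(y)}{K-1}=\binom{K+n-2}{K-1}$.

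Combining both bounds yields $p(y)\leq p_{0y_1}/\binom{K+n-2}{K-1}\leq (\max_i p_{0i})/\binom{K+n-2}{K-1}$. Equality throughout forces (i) $n_i(y)=n-1$ for some state $i$, hence $y_1=\cdots=y_{n-1}=i$; (ii) concentration of row $i$, which combined with (i) forces $y_n=i$ as well; and (iii) $y_1\in\arg\max_i p_{0i}$. Conversely, any constant sequence $(i,\ldots,i)$ with $i\in\arg\max_i p_{0i}$ clearly saturates both inequalities, which identifies $\arg\max_y p(y)$ with the stated set. The main obstacle I anticipate is the discrete majorization step and its uniqueness: one must verify that for any non-concentrated profile $(n_l)$ a single unit transfer from a nonzero coordinate to a largest coordinate strictly decreases $\sum_l g(n_l)$, which however follows at once from the strict monotonicity of $\log\frac{K+N}{N+1}$ in $N$. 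Apart from this, the argument is bookkeeping.
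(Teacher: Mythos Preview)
Your proof is correct and follows essentially the same decomposition as the paper: both bound $\prod_j n_{lj}!\le n_l!$ via the multinomial coefficient and then show that $\prod_l n_l!/(n_l+K-1)!$ is maximised when the vector $(n_l)_l$ is concentrated on a single coordinate. The only cosmetic difference is that the paper establishes the second bound via the elementary inequality $\prod_l(n_l+a)\ge a^{K-1}(n-1+a)$ applied for each $a=1,\ldots,K-1$, whereas you package the same underlying concavity as a majorization/pairwise-swap argument on $g(N)=\log\binom{K-1+N}{K-1}=\sum_{a=1}^{K-1}\log(N+a)-\log(K-1)!$.
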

\begin{proof} When $\alpha_{lj}=1$, then $p(y)$ is according to (\ref{p(y)}) for any sequence $y$ given by
$$p(y)=p_{0y_1}[\Gamma(K)]^K\prod_l {\prod_j \Gamma(1+n_{lj}(y))\over \Gamma(K+n_l(y))}=p_{0y_1}[\Gamma(K)]^K \prod_l {\prod_j n_{lj}(y)!\over (n_l(y)+K-1)!} .$$
For the proof it suffices to show that any constant sequence maximizes the product term in the expression above.
Fix $y$ and denote $n_{lj}:=n_{lj}(y)$. Since for every $l=1,\ldots,K$, $\sum_j n_{lj}=n_l$ and $\sum_l n_l=n-1$, the following inequality holds for any integer $a>0$:
$\prod_{l=1}^K(n_l+a) \ge a^{K-1}(n-1+a)$, where the equality holds only if $n_l=n-1$ for some $l=1,\ldots,K$. Therefore,
\[ \prod_{l=1}^K{\prod_{j=1}^K n_{lj}!\over (n_l+K-1)!} \leq \prod_{l=1}^K{n_l!\over (n_l+K-1)!} \le \Big({1\over {2\cdot 3\cdots (K-1)})}\Big)^{(K-1)}\Big(n(n+1)\cdots (n+K-2)\Big)^{-1}.\]
The inequality is strict if $\exists j,k$ such that $n_j>0$ and $n_k>0$. Thus, the upper bound is reached only if $n_{l}=n-1$ for some $l=1,\ldots,K$, that is we have a constant state sequence.
\end{proof}
\noindent To summarize: assuming nothing about the transition
matrix is not equivalent to not assuming anything about the state sequences. On the
contrary, equiprobable paths correspond to the fixed
transition matrix with all entries equal to ${1\over K}$, which is
a very specific and strong assumption about the transition matrix.
\paragraph{The role of precision parameter.} Recall the parametrization $\alpha_{lj}=Mq_{lj}$.
When $q_{lj}=1/K$ for every $l$ and $j$, then the precision parameter
$M$ can be considered as a regularization parameter in the optimization problem
\begin{equation}\label{optln}
\max_y\Big(\ln p(x|y)+\ln p_M(y)\Big),
\end{equation}
where the subscript $M$ denotes the dependence on $M$. Increasing
$M$ corresponds to reducing the role of $\ln p(y)$, thus the limit
case $M \to \infty$ corresponds to $\ln p(y)=const$ (all paths are
equiprobable). Therefore, when $M \to \infty$,  (\ref{optln})
reduces to $\max_y p(x|y)$. The case with $M=K$ corresponds to the
case of uniform Dirichlet priors with $\alpha_{lj}=1$, and in this
case the role of $\ln p(y)$ in (\ref{optln}) is to make the output
sequences more constant. Thus, when $q_{lj}=1/K$, then decreasing $M$ means changing the
sequence prior $p_M(y)$ so that the sequences with large blocks will have
more weight.

In this article, we also consider $Q$-matrices, where the
entries on the main diagonal  have larger values than the
off-diagonal elements. With such $Q$, for every $M$ the sequence
prior $p_M(y)$ puts more weight on the  sequences with big blocks
and the most probable sequences are constant ones. However, this
behavior is even more pronounced for small $M$. Indeed, if $M\to
\infty$, then for every  $y$, $p_M(y)\to
p_{0y_1}\prod_{l,j}q_{lj}^{n_{lj}(y)}:=p_{\infty}(y)$. It is easy to
see that for constant sequences the convergence is monotone. For
example, if $y=1,\ldots,1$, then as $M\to \infty$, it holds that
$p_M(y)\searrow p_{01}q_{11}^{n-1}=p_{\infty}(y)$. Thus, since the
entries on the main diagonal  have larger values than the
off-diagonal elements, the limit measure $p_{\infty}$ puts more
weight on sequences with large blocks. But due to the monotone
convergence, we see that for smaller $M$ the measure $p_M(y)$
concentrates on such sequences even more.
{\subsection{Clustering under normal emissions with NIX priors} \label{clusters} To understand the
role of emission hyperparameters, it is instructive to consider the optimization problem $\max_y p(x|y)$.  In the Bayesian
HMM setup this corresponds to the limit case $M\to\infty$ when $q_{lj}=1/K$ $\forall l,j$, thus $p_M(y)=const$. Since $p_M(y)$ is not involved in
segmentation anymore, the whole temporal structure of the model is dropped and it is more correct to refer to the problem as {\it clustering}.
We will show that the nature of the clustering problem and its solutions depend heavily
on the hyperparameters. It turns out that under NIX priors, the
family of possible clustering problems is large, including many
familiar $k$-means related problems. We will briefly discuss some of
them. Typically, `standard' problems are obtained when the hyperparameters $\nu_0$, $\kappa_0$ and $\tau^2_0$ approach their
extreme values, that is $0$ or $\infty$. The details about the formulae are given in the Appendix.
\paragraph{I. The case $\nu_0\to \infty$.} When $\nu_0\to \infty$, then the problem $\max_y p(x|y)$ approaches for given $\tau^2_0>0$ the following clustering problem: find clusters
$S_1,\ldots,S_K$ that minimize
\begin{equation}\label{clust}
\sum_{k=1}^K \Big[\sum_{t\in S_k}
(x_t-\bar{x}_k)^2+{\kappa_0 m_k\over \kappa_0+m_k}
(\bar{x}_k-\xi_k)^2+\tau_0^2 \ln (\kappa_0+m_k)\Big],\end{equation}
which is equivalent to minimizing
\begin{equation} \label{clustequiv}
\sum_{k=1}^K \min_{\mu_k\in {\cal X}}\Big[\sum_{t\in
S_k}(x_t-\mu_k)^2+\kappa_0 (\mu_k-\xi_k)^2+\tau_0^2 \ln
(\kappa_0+m_k)\Big],  \end{equation}
where $m_k=|S_k|$ (see the Appendix). The first term in (\ref{clustequiv}) corresponds to the  sum of least squares,
the second term tries to form clusters around $\xi_k$ and the third term tries to build clusters
of unequal size. Thus, if $\tau_0^2$ is very big, then one
cluster gets very big and the others are empty or very small. For small $\tau^2_0$, the influence of the third term is small.
When $\kappa_0 \to 0$, the second term disappears. This corresponds to the case where the variance of $\mu_k$ is infinite (uninformative prior for $\mu_k$).
The case with $\kappa_0 \to 0$ and $\tau_0^2 \to 0$ corresponds to the classical $k$-means optimization problem.

The segmentation MM algorithm acts in the case $\nu_0 \to \infty$
for any $\tau_0^2$ as follows: given clusters
$S^{(i)}_1,\ldots,S^{(i)}_K$, find the corresponding cluster centres
$$\mu^{(i)}_k=(m^{(i)}_k\bar{x}^{(i)}_k+\kappa_0\xi_k ) / (\kappa_0+m^{(i)}_k ).$$
Given these centres, find new clusters corresponding to the Voronoi partition:
$$S_k^{(i+1)}=\{x_t:|x_t-\mu_k^{(i)}|=\min_l|x_t-\mu_l^{(i)}|\}.$$
In the case of segmentation EM algorithm the cluster centres $\mu^{(i)}_k$ are calculated in the same way as for segmentation MM, but the
clustering rule is different:
\begin{equation}\label{tyk}
S_k^{(i+1)}=\left\{x_t:\big(x_t-\mu_k^{(i)}\big)^2+{\tau_0^2 \over
m^{(i)}_k+\kappa_0}=\min_l \left[ \big(x_t-\mu_l^{(i)}\big)^2+{\tau_0^2
\over m^{(i)}_l+\kappa_0}\right]\right\}.\end{equation}
The term $\tau_0^2 /(m^{(i)}_l+\kappa_0)$ in (\ref{tyk}) affects cluster size. When $\tau^2_0$ is small, then segmentation EM and MM give the same result,
but when $\tau^2_0$ increases, then segmentation EM tends to produce clusters of unequal size, whereas segmentation MM remains unaffected by $\tau^2_0$.
When $\kappa_0 \to 0$ and $\tau^2_0\to 0$, then both algorithms converge to the standard Lloyd algorithm.

{When $\kappa_0\to \infty$, the clustering problem in (\ref{clust})
reduces to minimizing $\sum_{k=1}^K\sum_{t\in S_k} (x_t-\xi_k)^2$
with the solution given by $ S_k=\{x_t: |x_t-\xi_k|=\min_l|x_t-\xi_l|\}$.
The solution matches fully with intuition, because $\nu_0 \to\infty$ and
$\kappa_0 \to \infty$ corresponds to the case with fixed normal emissions with means $\xi_k$
and variances $\tau_0^2$, thus clustering is trivial.
%
\paragraph{II. The case with finite $\nu_0$.} For a given $\nu_0$, the general optimization problem we have is the following: find clusters $S_1,\ldots,S_k$ minimizing the sum
\begin{equation}\label{big}
-\sum_k \ln \Gamma({\nu_0+m_k\over 2})+{1\over 2}\sum_k \ln
(\kappa_0+m_k)+\sum_k{\nu_0+m_k\over 2}\min_{\mu_k\in {\cal X}} \ln
\big(\nu_0\tau^2_0+ \sum_{t\in S_k}
(x_t-\mu_k)^2+\kappa_0(\mu_k-\xi_k)^2\big).\end{equation}
The first two terms in (\ref{big}) tend to make size of the clusters unequal. This follows from the observation that under the constraint
$\sum_k m_k=n$, the products
$$\prod_{k=1}^K\Gamma({\nu_0+m_k\over 2}),\quad \prod_{k=1}^K(\kappa_0+m_k)^{-{1\over 2}}$$
are maximized when $m_k=n$ holds for some $k$. The smaller $\nu_0$ and $\kappa_0$ are, the bigger is the influence of the first two terms.
When $\kappa_0\to \infty$, the problem of finding optimal clusters $S_1,\ldots,S_K$ reduces to minimizing
$$-\sum_k \ln \Gamma({\nu_0+m_k\over 2})+\sum_k{\nu_0+m_k\over 2} \ln
\big(\nu_0\tau^2_0+ \sum_{t\in S_k} (x_t-\xi_k)^2). $$
The solution to this problem gives bigger clusters than obtained by
minimizing $\sum_{k=1}^K\sum_{t\in S_k} (x_t-\xi_k)^2$. When $\tau_0^2\to \infty$, then the
last term in (\ref{big}) disappears and the problem reduces to finding clusters that minimize the sum of the first two terms in (\ref{big}).
The solution here is one big cluster.

As our examples have shown, clustering under NIX setting is highly dependent on hyperparameters, and the choice of hyperparameters can strongly affect the resulting segmentation.
%
%
%
\section{Similarity of the algorithms studied} \label{disc}
We have seen that out of the five non-stochastic optimization methods
(segmentation EM, segmentation MM, ICM, Bayesian EM and VB), ICM is
clearly most inadequate, because it depends heavily on
initial sequences and gets stuck in local optima. The other four
methods can be divided into two groups, which can be characterized
as segmentation-based methods (segmentation EM and segmentation MM)
and parameter-based methods (Bayesian EM and VB).
We call VB a parameter-based method, because it updates the parameters iteratively and then, with final $h_k$ and
$u_{lj}$, the Viterbi algorithm is applied (see Subsection \ref{A1}). The segmentation EM and MM methods apply the
Viterbi algorithm at each iteration step. Our numerical examples
demonstrate a clear advantage of the segmentation-based methods,
which is also expected, because segmentation EM optimizes the
objective function of interest and segmentation MM behaves very
similarly.

We already observed the pairwise similarity of the
segmentation-based methods and the parameter-based methods in
Examples 1 and 2. In the case that emission distributions are known, the
four algorithms can be further summarized as follows. Comparing
(\ref{solution3EM}) and (\ref {VBu}) shows that both the Bayesian EM
and VB updates can be written as
\[
\ln
p_{lj}^{*(i+1)}=f_1\big(\xi^{(i)}(l,j)+Mq_{lj}\big)-f_2\big(\sum_j\xi^{(i)}(l,j)+M\big),\]
where $p^*_{lj}$ is either $p_{lj}$ (Bayesian EM) or $u_{lj}$ (VB),
and where $f_1(x)=\ln (x-1)$, $f_2(x)=\ln (x-K)$ for Bayesian EM and
$f_1=f_2=\psi$ for VB. Similarly, the transition updates for
segmentation MM (\ref{MM}) and segmentation EM (\ref{uij}) can be
written as
\[ \ln p_{lj}^{*(i+1)}=f_1\big(n_{lj}(y^{(i)})+Mq_{lj}\big)-f_2\big(n_{l}(y^{(i)})+M\big),\]
where $f_1(x)=\ln (x-1)$, $f_2(x)=\ln (x-K)$ for segmentation MM and $f_1=f_2=\psi$ for segmentation EM. Thus, the four methods can be
characterized by two parameters: the {\it function} parameter ($\ln$  vs $\psi$) and the {\it counts} parameter
(direct counts $n_{lj}(s)$ versus averaged counts $\xi(l,j)$):
\begin{center}\begin{tabular}{|l|c|c|}
  \hline
  Counts/Function & $\ln$ & $\psi$ \\\hline
  Direct ($n_{lj}$) & sMM & sEM \\\hline
  Averaged ($\xi(l,j)$) & B(EM) & VB \\
  \hline
\end{tabular}
\end{center}
The results of Examples 1 and 2 show that the
difference in functions does not influence the algorithm as much
as the difference in counts, because the methods behave similarly row-wise. The examples also show that in terms of maximizing the main study criterion, that is the
posterior likelihood, the methods using direct counts outperform the methods that use averaged counts.
We have noticed that the methods using $\ln$-function give
sometimes slightly larger posterior probability than the ones using $\psi$, and this is a matter for future research.
%
%
\section{Conclusions and further research}
%
The paper is mainly devoted to studying non-stochastic
algorithms for finding the Viterbi path in Bayesian hidden Markov
models. The performance of the segmentation EM method introduced in the article
has been compared with other well-known non-stochastic methods (segmentation MM, iterative conditional mode, variational Bayes, Bayesian EM) as well as with the simulated annealing approach.

The segmentation EM method that optimizes the correct objective
function mostly outperforms the other studied methods, often also
the simulated annealing method. It should be noted that the
possibility to apply the segmentation EM algorithm should not be
taken for granted for any model. For many models the EM algorithm
can be written down easily theoretically, but the maximization
and/or expectation step can be impractically complicated to perform.
One example of such a model is the hidden Markov model with infinite
state space (hierarchical Dirichlet processes), where the E-step
involves intractable integrals. In our setup {with Dirichlet prior
distributions and emissions from the exponential family} the E-step
involves well-known digamma functions and the M-step reduces to the
Viterbi algorithm, therefore the segmentation EM is easily
applicable.

Our study demonstrates that when the main goal of inference is segmentation,
then the Bayesian approach should be used. The Bayesian setup enables to
concentrate directly on segmentation and skip the parameter estimation step.

It is a little surprising that the segmentation MM method behaves in
our examples as well as the segmentation EM algorithm, since the
performance of the same method in the context of parameter
estimation (known then as Viterbi training) is often notoriously
bad. The similarity of the segmentation EM and MM methods is shortly
discussed in Section 5, but a good performance of segmentation MM
needs further investigation. For practitioners we advise to be
careful with the segmentation MM method, because it does not
optimize the right criterion function as segmentation EM does.

The segmentation EM and MM methods are sensitive with respect to
initial sequences, therefore the choice of initial sequences is
crucial. Since for both algorithms it is actually the empirical
transition matrix of the initial sequence that is the input to the
algorithm, initial sequences should be chosen so that the
corresponding empirical transition matrices are different and
somehow cover the search space.

The article brings out the important role of hyperparameters in the Bayesian context,
different issues regarding this topic are thoroughly discussed in Section 4. Our results
demonstrate that hyperparameters determine largely the nature of the segmentation problem
and the properties of the solution, they also control the influence of data. The simulation
examples show that even a small change in some of the hyperparameters can change
the problem drastically. This is obviously a disappointment for practitioners because the idea
of Bayesian approach is to get rid off the choice of parameters, and now it turns out that the
hyperparameters should be chosen equally carefully. It seems to
us that the role of hyperparameters is overlooked in the
literature, at least in the segmentation context.

The concluded research opens several interesting directions for future studies.
As pointed out in Introduction, a common alternative to the Viterbi path in practice is
the PMAP path, which is the state path estimate that minimizes the expected number of
classification errors. For given parameters (known or estimated), the PMAP path can be found with the well-known forward-backward algorithm.
How to find the PMAP path in the Bayesian setup is an open and challenging question, since there is no obvious analogue to
the segmentation EM or MM algorithm in this case.

Another appealing research question is about incorporating
inhomogeneity to the model. In the Bayesian setup inhomogeneity
means the change of priors from time to time. In the case of known
change points and independent priors the situation reduces to
cutting the whole model into independent submodels. However, in
general and thus even in the Bayesian setup it might instead be
preferable to consider the model where the change points are not
exactly known. Suppose there are a few possible transition matrices
$\{\mathbb{P}_i\}$ and the underlying chain $Y$ is inhomogeneous
driven by one of these matrices at a time. However, we do not know
{\it a priori} which matrix drives the transition at a given time
$t$. An elegant way for incorporating such kind of variability and
information into the model is the so-called triplet Markov models
(TMMs) introduced by Piecynski \cite{P07}. In TMMs, instead of a
Markov chain $Y$ a bivariate Markov chain $(Y,U)$ is considered,
where the additional component $U$ allows a change of the transition
matrix. Since $Y$ is not a Markov chain, the pair  $(X,Y)$ is not an
HMM anymore, and therefore it is not obvious how to find the Viterbi
path in this model. A closer inspection indicates that segmentation
EM might still be applicable, at least under some additional
assumptions. A further step would be to consider a hierarchical
model where the Dirichlet hyperparameters, say $\alpha$, are modeled
in the way described, that is $(\alpha,U)$ is a bivariate Markov
chain. This incorporates both the approach with Dirichlet transition
priors and the approach with variable change points.

Since segmentation in the Bayesian setup heavily depends on
hyperparameters, it would be tempting to put additional priors on
hyperparameters. Such models are sometimes called hierarchical.
Another example of a hierarchical model is hierachical Dirichlet
processes (see \cite{BayesNonparam}, Ch. 5), where the number of
hidden states is not fixed any more. Such models are complicated and
how to design non-stochastic segmentation algorithms in this case is
a very interesting research area.
%
%
\section{Appendix}
\subsection{General formulae for the  segmentation methods studied} \label{A1}
Due to our independence assumption, all emission and transition parameters can be estimated separately. In the formulae of this section we use the same notation for the random parameters $p_{l,j}$, $\mu_k$ and $\sigma_k^2$, $k,l,j\in \{1,\ldots,K\}$, and the corresponding estimates. The exact meaning can be understood from the context.
\paragraph{Segmentation MM.}
\noindent In the case of Dirichlet priors the matrix $\theta_{tr}^{(i+1)}$ can be found row-wise, the $l$-th
row is the posterior mode:
\begin{equation}\label{MM}
p^{(i+1)}_{lj}={\alpha_{lj}+n_{lj}(y^{(i)})-1\over
\alpha_l+n_l(y^{(i)})-K}.\end{equation}
Emission parameters can be updated independently:
\[\theta_{em}^{k,(i+1)}=\arg\max_{\theta^k_{em}}p(\theta^k_{em}|x_{S_k})=
\arg\max_{\theta^k_{em}}\Big[\sum_{t: y^{(i)}_t=k} \ln
f_k(x_t|\theta^{k}_{em})+\ln \pi^k_{em}(\theta^{k}_{em})\Big] ,
\quad k=1,\ldots,K,\]
where $x_{S_k}$ is the subsample of
$x$ corresponding to  state $k$ in $y^{(i)}$. Formally, for every
sequence $y\in S^n$ define $S_k(y)=\{t\in \{1,\dots,n\}:
y_t=k\}$, then $x_{S_k}=\{x_t: t\in S_k\}$.
\paragraph{Bayesian EM.}
\noindent The emission updates are given by
\begin{equation} \label{tarn}
 \theta_{em}^{k,(i+1)}=\arg\max_{\theta^k_{em}}\Big[\sum_t \ln f_k(x_t|\theta^{k}_{em})\gamma_t^{(i)}(k)+\ln
\pi^k_{em}(\theta^{k}_{em})\Big],\quad k=1,\ldots,K, \end{equation}
where
\begin{equation}\label{defgamma}
\gamma_t^{(i)}(k):=P(Y_t=k|X=x,\theta^{(i)})=\sum_{y:
y_t=k}p(y|\theta^{(i)},x).\end{equation}
In the case of Dirichlet priors the transition updates are given by
\begin{align}\label{BEMsolution}
{p}^{(i+1)}_{lj}&={\xi^{(i)}(l,j)+ (\alpha_{lj}-1)\over \sum_j
\xi^{(i)}(l,j)+(\alpha_l-K)},\quad \mbox{where} \quad
\xi^{(i)}(l,j):=\sum_{t=1}^{n-1}P(Y_{t}=l,Y_{t+1}=j|x,\theta^{(i)}).\end{align}
Since one of the studied methods (ICM) starts with an initial
sequence, in order the comparison to be fair, we let all the other
methods to start with a sequence as well. Therefore, for a given initial sequence
$y^{(0)}$, define
\begin{equation}\label{gamma0}
\gamma_t^{(0)}(k):=I_{k}(y_t^{(0)}),\quad
\xi^{(0)}(l,j):=n_{lj}(y^{(0)}).\end{equation}
\paragraph{Variational Bayes approach.}
\noindent Let  us have a closer look at the measure $\Qy^{(i+1)}(y)$. We are going to show that there exists an HMM
$(Z,X)$ such that for every sequence $y$, $\Qy^{(i+1)}(y)={P}(Z=y|X=x)$.
By definition,
$$\Qy^{(i+1)}(y)\propto \exp\Big[\int   \ln p(\theta,y|x)
\Qt^{(i+1)}(d\theta)\Big].$$
Apply the notation from \eqref{seg-EM-V} in the current case:
$$u^{(i+1)}_{lj}=\exp[\int \ln p_{lj}(\theta_{tr}) \Qt^{(i+1)}(d\theta)],\quad
h^{(i+1)}_k(x_t)=\exp[\int \ln f_k(x_t|\theta_{em}^k)\Qt^{(i+1)}(d\theta)].$$
Since $\ln p(\theta,y|x)=\ln \pi(\theta)+\ln p(y,x|\theta)-\ln p(x)$, we obtain
%
\begin{align} \notag
&\int \ln p(\theta,y|x) \Qt^{(i+1)}(d\theta)=\int \ln
\pi(\theta)\Qt^{(i+1)}(d\theta)-\ln
p(x) + \int \ln p(y,x|\theta)\Qt^{(i+1)}(d\theta) \\ \notag
&=c(\Qt^{(i+1)},x)+ \ln p_{0 y_1}+\sum_{lj}n_{lj}(y)\ln
u^{(i+1)}_{lj}+\sum_{k=1}^K
\sum_{t: y_t=k} \ln h^{(i+1)}_k(x_t)\\ \label{Zhmm}
&=c(\Qt^{(i+1)},x)+ \ln p_{0 y_1}+\sum_{lj}n_{lj}(y)\ln
\tilde{u}^{(i+1)}_{lj}+\sum_{k=1}^K \sum_{t: y_t=k} \ln \tilde{h}^{(i+1)}_k(x_t),
\end{align}
where $\tilde{u}_{lj}$ is the normalized quantity, $
\tilde{u}_{lj}:={u_{lj}\over \sum_j u_{lj}}$, and
$\tilde{h}_k(x_t):=(\sum_j u_{kj})h_k(x_t)$, if $t\leq n-1$,
$\tilde{h}_k(x_n)=h_k(x_n)$. Let now $(Z,X)$ be an HMM, where $Z$ is
the underlying Markov chain with transition matrix $(\tilde{u}_{lj})$
and emission densities are given by $\tilde{h}_k$. From (\ref{Zhmm})
it follows that $\Qy^{(i+1)}(y)\propto P(Z=y|X=x)$. Since $\Qy^{(i+1)}$ and
$P(Z\in \cdot |X=x)$ are both probability measures, it follows that they are equal.
To stress the dependence on iterations, we will denote $\Qy^{(i+1)}(y)=P^{(i+1)}(Z=y|X=x)$.

Let us now calculate  $\Qt$. Let $\gamma_t^{(i)}(k)$ denote the marginal of $\Qy^{(i)}(y)$,
 $$\gamma_t^{(i)}(k):=P^{(i)}(Z_t=k|X=x)=\sum_{y:y_t=k}\Qy^{(i)}(y).$$
Observe that
\[\sum_{y}\ln p(y,x|\theta)\Qy^{(i)}(y)=C_1 +\sum_{l,j}\ln
p_{lj}(\theta_{tr})\big(\sum_y
n_{lj}(y)\Qy^{(i)}(y)\big)+\sum_{t=1}^n\sum_{k=1}^K\ln
f_k(x_t|\theta^{k}_{em})\gamma^{(i)}_t(k),\]
where $C_1:=\sum_{k}(\ln p_{0k})\gamma_1^{(i)}(k)$.
The sum $\sum_y n_{lj}(y)\Qy^{(i)}(y)$ is the expected number of
transitions from $l$ to $j$, so that using the equality
$\Qy^{(i)}(y)=P^{(i)}(Z=y|X=x)$, we have
\[\sum_y n_{lj}(y)\Qy^{(i)}(y)=\sum_{t=1}^{n-1}P^{(i)}(Z_t=l,Z_{t+1}=j|X=x)=:\xi^{(i)}(l,j). \]
Therefore,
\begin{equation}\label{VBupdate}
 \ln \Qt^{(i+1)}(\theta)=C+\ln \pi_{tr}(\theta_{tr})+\ln \pi_{em}(\theta_{em})+
\sum_{l,j}\xi^{(i)}(l,j)\ln p_{lj}(\theta_{tr})+\sum_{t=1}^n\sum_{k=1}^K \ln f_k(x_t|\theta^{k}_{em})\gamma^{(i)}_t(k).\end{equation}
From (\ref{VBupdate}) we can see that under $\Qt^{(i+1)}$ the
parameters $\theta_{tr},\theta^{1}_{em},\ldots,\theta^{K}_{em}$ are
still independent and can therefore be updated separately. In the
case of Dirichlet transition priors the rows are independent as well.
The transition update for the $l$-th row and the emission update for the $k$-th component are given by
\[\Qt^{(i+1)}(p_{l1},\ldots,p_{lK})\propto \prod_{j=1}^K
p_{lj}^{\alpha_{lj}-1+\xi^{(i)}(l,j)}, \quad
\Qt^{(i+1)}(\theta_{em}^k)\propto \pi(\theta_{em}^k)\prod_{t=1}^n \big( f_k(x_t|\theta_{em}^k)\big)^{\gamma_t^{(i)}(k)}. \]
The whole VB approach is applicable since $\xi^{(i)}(l,j)$
 and $\gamma_t^{(i)}(k)$ can be found
by the standard forward-backward formulae using
$\tilde{u}_{lj}^{(i)}$ and $\tilde{h}^{(i)}_k$. Actually, it is not
difficult to see that in these formulae the original $u^{(i)}_{lj}$
and $h^{(i)}_k$ can be used instead of the standardized ones.
To summarize, in our setup the VB approach
yields the following algorithm for calculating $\hat{v}_{\rm{VB}}$. For a given initial sequence $y^{(0)}$, find vector
$\gamma_t^{(0)}$ and matrix $\xi^{(0)}$ as in (\ref{gamma0}). Given $\gamma_t^{(i)}$ and $\xi^{(i)}$, update
$u^{(i+1)}_{lj}$ and $h_k^{(i+1)}$ as follows:
\begin{equation}\label{VBu}
u^{(i+1)}_{lj}=\exp[\psi(\alpha_{lj}+\xi^{(i)}(l,j))-\psi(\alpha_{l}+\xi^{(i)}(l))],\quad
\text{where}\quad \xi^{(i)}(l):=\sum_j \xi^{(i)}(l,j),\end{equation}
\[ h^{(i+1)}_k(x_t)=\exp[\int \ln f_k(x_t|\theta_{em}^k)\Qt^{(i+1)}(d\theta)],\quad
\text{where}\quad \Qt^{(i+1)}(\theta_{em}^k)\propto
\pi(\theta_{em}^k)\prod_{t=1}^n \big(f_k(x_t|\theta_{em}^k)\big)^{\gamma_t^{(i)}(k)}.\]
With these parameters, $\xi^{(i+1)}$  and $\gamma_t^{(i+1)}$ can be calculated with the usual forward-backward procedure for
HMM. Then update $u^{(i+2)}_{lj}$ and $h_k^{(i+2)}$ and so on. After
the convergence, say after $m$ steps, apply the Viterbi algorithm
with transitions $(u_{ij}^{(m)})$ and emission densities
$h^{(m)}_k$. The obtained path  maximizes $\Qy^{(m)}(y)$ over
all the paths, so it is $\hat{v}_{\rm{VB}}$.
\paragraph{Simulated annealing.} Because of independence of the emission and transition
parameters, it holds even for $\beta>1$ that
$p_{\beta}(\theta|y,x)=p_{\beta}(\theta_{tr}|y)p_{\beta}(\theta_{em}|y,x)$, thus
the transition and emission parameters can  be sampled
separately. When the rows of a transition matrix have independent
Dirichlet priors, the $l$-th row can be generated from the
Dirichlet distribution with parameters
$\beta(n_{lk}(s)+\alpha_{lk})+1-\beta$, $k=1,\ldots,K$. For given
$\theta$, sampling from $p(y|\theta,x)$ can be performed in various
ways: we use so-called {\it Markovian Backward Sampling} (Algorithm 6.1.1 in \cite{HMMbook}). To sample from
$p_{\beta}(y|\theta,x)$, note that
$$p(x,y|\theta)^{\beta}={p^{\beta}_{0y_1}\over
\sum_{j}p^{\beta}_{0,j}}\prod_{t=2}^{n}{\tilde p}_{y_{t-1} y_{t}}\tilde{f}_{y_t}(x_t),$$
where $\tilde{p}_{ij}:={p^{\beta}_{ij} / \sum_{j}p^{\beta}_{ij}}$,
and
$\tilde{f}_k(x_t):=\big(\sum_{j}p^{\beta}_{ij}\big)f^{\beta}_k(x_t)$, $t=1,\ldots,n-1$, $\tilde{f}_k(x_n):=\big(\sum_{j}p^{\beta}_{0j}\big)f^{\beta}_k(x_n).$
Although the functions $\tilde f_k$ are not densities, one can still
use Markovian Backward Sampling.
\subsection{Non-stochastic segmentation algorithms for the Dirichlet-NIX case}\label{A3}
Suppose the emission distribution corresponding to state $k$ is
$\mathcal N(\mu_k,\sigma_k^2)$, where prior distributions for $\mu_k$ and $\sigma_k^2$ are given by a normal and scaled
inverse-chi-square distribution, respectively:
$$\mu_k |\sigma_k^2\sim \mathcal N \big(\xi_k,{\sigma_k^2\over \kappa_0}\big),\quad
\sigma^2_k\sim {\rm Inv-}\chi^2(\nu_{0},\tau^2_{0}).$$
Here $\kappa_0$, $\nu_{0}$ and $\tau^2_{0}$ are hyperparameters that might depend on $k$, but  in our example we assume they
are  equal. Recall the density of ${\rm Inv-}\chi^2(\nu,\tau^2)$:
$$f(x;\nu,\tau^2)={(\tau^2\nu /2)^{\nu/2}\over
\Gamma({\nu/2})}x^{-(1 + \nu /2)}\exp[-{\nu \tau^2\over 2x}].$$  If $X\sim {\rm Inv-}\chi^2(\nu,\tau^2)$, then
$$EX={\tau^2\nu \over \nu-2},\quad {\rm Var}(X)={2\tau^4\nu^2 \over (\nu-2)^2(\nu-4)},\quad  E(\ln X)=\ln \big({\nu \tau^2\over 2}\big)-\psi \big({\nu \over
2}\big),\quad E X^{-1}=\tau^{-2},$$
and the mode of the distribution is given by $\nu \tau^2/(\nu+2)$.
Therefore, if $\nu_0$ and $\kappa_0$ are both very large, then $\sigma_k^2\approx \tau_0^2$ and $\mu_k\approx \xi_k$, and we get back to the first example. If $\nu_0$ is very large, then
$\sigma^2_k\approx \tau_0^2$, so that emission variances are
$\tau_0^2$, but  the variance of the mean is approximately $\tau_0^2 /\kappa_0$.

Since emission and transition parameters are
independent, the transition parameters can be updated as previously, that is as described in Section \ref{A1}.
Because the emission components $(\theta^1_{em},\ldots,\theta^K_{em})$ are independent under prior
and posterior, it holds that $p(\theta_{em}|y,x)=\prod_k p(\theta^k_{em}|x_{S_k})$, where $x_{S_k}$ is the subsample of $x$ along $y$ corresponding to state $k$.
Let $m_k(y)$ be the size of $x_{S_k}$. Let $\bar{x}_k$ and $s^2_k$ be the mean and variance of $x_{S_k}$.
Since NIX-priors are conjugate, for any state $k$
the posterior parameters $\kappa_{k}$, $\nu_{k}$, $\mu_{k}$ and $\tau_{k}^2$ can be calculated as follows:
\begin{align}\label{k-nid}
\kappa_{k}&=\kappa_0 + m_k\, , \quad \quad \nu_{k} =\nu_0+m_k,
\\\label{k-nid1}
\mu_{k} & = \frac{\kappa_0}{\kappa_0+m_k}\xi_k +
\frac{m_k}{\kappa_0+m_k}\bar{x}_k, \\\label{k-nid2} \nu_{k}
\tau_{k}^2& =\nu_0\tau_0^2 + (m_k-1)s^2_k + \frac{\kappa_0 m_k
}{\kappa_0 +m_k}(\bar{x}_k-\xi_k)^2, \quad
s_k^2=\frac{1}{m_k-1}\sum_{t\in S_k} (x_t-\bar{x}_k)^2,
\end{align}
see \cite{Murphy}. We also need to calculate for every path $y$ the
joint probability $p(x,y)=p(y)p(x|y)$. Due to the independence of
transition and emission parameters, $p(y)$ is still as in
(\ref{p(y)}) and $p(x|y)$ depends on emission parameters, only.
According to the formula for the marginal likelihood (see, e.g.
\cite{Murphy}) we obtain
\begin{align}\label{jaana}
p(x|y)&=\prod_{k=1}^K\int \prod_{t\in
S_k}f_k(x_t|\theta^k_{em})\pi(\theta^k_{em})d
\theta^k_{em}=\prod_{k=1}^K{\Gamma({\nu_{k}\over 2})\over \Gamma({\nu_{0}\over
2})}\sqrt{{\kappa_{0}\over \kappa_{k}}}{(\nu_0\tau_0^2)^{\nu_0\over 2}\over (\nu_{k}\tau^2_{k})^{\nu_{k}\over 2}}\pi^{-{m_k\over 2}}.\end{align}
We will now give a more detailed description of the non-stochastic algorithms for Example 2.
\paragraph{Bayesian EM.} Start with initial state sequence $y^{(0)}$. With this sequence, find for any state $k$ the parameters
$\kappa_{k}$, $\mu_{k}$, $\nu_{k},\tau^2_{k}$ as
defined in  (\ref{k-nid}), (\ref{k-nid1}),
(\ref{k-nid2}) and calculate the posterior modes, that is update
\begin{align*}
p_{lj}^{(1)}&={n_{lj}(y^{(0)})+(\alpha_{lj}-1)\over {\sum_j n_{lj}(y^{(0)})+(\alpha_{l}-K)}},\quad
\mu^{(1)}_k=\mu_{k},\quad (\sigma_k^2)^{(1)}={\nu_{k}\tau_{k}^2\over \nu_{k}+2},\quad k=1\ldots,K.
\end{align*}
With these parameters calculate the vectors $\gamma_t^{(1)}$ and
matrix $(\xi^{(1)}(l,j))$ as in  (\ref{defgamma}) and
(\ref{BEMsolution}) using the forward-backward formulae. Given
$\gamma_t^{(i)}$ and $\xi^{(i)}(l,j)$, the transition parameters are
updated according to (\ref{BEMsolution}).
The emission updates are given by (\ref{tarn}). Let
us calculate $\theta_{em}^{k,(i+1)}$ for the NIX-model. Suppress $k$
from the notation and observe that
$\theta^{(i+1)}_{em}=(\mu^{(i+1)},(\sigma^2)^{(i+1)})$ maximizes the
following function over $\mu$ and $\sigma^2$:
\begin{align*}
&\sum_t
\ln f(x_t|\mu,\sigma^2)\gamma^{(i)}_t+\ln \pi(\mu|\sigma^2)+\ln \pi(\sigma^2)=\\
&{\rm const}-{1\over 2}\Big[ (\ln \sigma^2) \big(\sum_t \gamma^{(i)}_t + (\nu_0 +3) \big)+  {1\over \sigma^2} \big(\sum_t {(x_t-\mu)^2}\gamma^{(i)}_t+{(\mu-\xi)^2\kappa_0}+{\nu_0 \tau_0^2}\big)\Big].
\end{align*}
The solutions $\mu_k^{(i+1)}$ and $(\sigma_k^{2})^{(i+1)}$ are given
by:
\[\mu_k^{(i+1)}={\sum_t x_t \gamma_t^{(i)}(k)+\xi_k\kappa_0\over \sum_t
\gamma^{(i)}_t(k) + \kappa_0},\quad (\sigma_k^{2})^{(i+1)}={\nu_0\tau_0^2+\sum_t(x_t-\mu_k^{(i+1)})^2\gamma^{(i)}_t(k)+(\mu_k^{(i+1)}-\xi_k)^2\kappa_0\over \sum_t \gamma^{(i)}_t(k) + \nu_0+3}.\]
With $\kappa_0\to 0$ (non-informative prior), $\mu_k^{(i+1)}$ is the
same as in the standard EM algorithm. Using the updated parameters, calculate $\gamma_t^{(i+1)}$ and $\xi^{(i+1)}(l,j)$.
Keep updating until the change in the log-likelihood is below the
stopping criterion.
%
%
%
\paragraph{Segmentation EM.}
Given sequence $y^{(i)}$, calculate for every state $k$ the parameters
$\kappa_{k}^{(i)}$, $\mu^{(i)}_{k}$, $\nu^{(i)}_{k}$ and $(\tau_{k}^2)^{(i)}$ using formulae
(\ref{k-nid}), (\ref{k-nid1}) and (\ref{k-nid2}). With these parameters,  calculate $h^{(i+1)}_k(x_t)$ as follows:
\begin{align}\label{hht}
\ln h^{(i+1)}_k(x_t)=-\frac{1}{2}\ln\Big(2\pi (\tau_{k}^2)^{(i)}\Big)-
\frac{1}{2}\left[ \ln\left( \frac{\nu^{(i)}_{k}}{2}
\right)-\psi\left(\frac{\nu^{(i)}_{k}}{2}\right)  \right]
-\frac{x_t^2}{2\left(\tau^{2}_{k}\right)^{(i)}}+x_t\frac{\mu^{(i)}_{k}}{\left(\tau^{2}_{k}\right)^{(i)}}-
\frac{1}{2}\left[\frac{1}{\kappa^{(i)}_{k}}+\left(\frac{\mu^{(i)}_{k}}{\tau_{k}^{(i)}}\right)^2
\right]. \end{align}
Compute the matrix $(u_{lj}^{(i+1)})$, where $\ln u^{(i+1)}_{lj}=\psi(\alpha_{lj}+n_{lj}(y^{(i)}))-\psi(\alpha_{l}+n_{l}(y^{(i)}))$.
To find $y^{(i+1)}$, apply the Viterbi algorithm with $u_{lj}^{(i+1)}$ and $h^{(i+1)}_k(x_t)$.
Keep doing so until no changes occur in the path estimate.
%
%
%
\paragraph{Segmentation MM.}
Given $y^{(i)}$, calculate $\mu^{(i)}_{k}$, $\nu^{(i)}_{k}$ and $(\tau_{k}^2)^{(i)}$ using
formulae (\ref{k-nid}), (\ref{k-nid1}) and (\ref{k-nid2}) and update
the posterior modes as follows:
\begin{align*}
p_{lj}^{(i+1)}&={n_{lj}(y^{(i)})+(\alpha_{lj}-1)\over {\sum_j n_{lj}(y^{(i)})+(\alpha_{l}-K)}},\quad
\mu^{(i+1)}_k=\mu^{(i)}_{k},\quad
(\sigma_k^2)^{(i+1)}={\nu^{(i)}_{k}\over \nu^{(i)}_{k}+2}(\tau_{k}^2)^{(i)}.
\end{align*}
With these parameters find $y^{(i+1)}$ by the Viterbi algorithm. Keep
doing so until no changes occur in the estimated state path.
\paragraph{VB algorithm.} Given an initial state sequence $y^{(0)}$, find  $h^{(1)}_k(x_t)$ and $u^{(1)}_{lj}$ as in
the segmentation EM algorithm. With these parameters, calculate $\gamma^{(1)}_t$ and $\xi^{(1)}(l,j)$ using the forward-backward
formulae. Given the matrix $\big(\xi^{(i)}(l,j)\big)$, update the
matrix  $(u_{lj}^{(i+1)})$ according to (\ref{VBu}). Given  $\gamma^{(i)}_t(k)$, the parameters
$\kappa^{(i)}_{k}$, $\mu^{(i)}_{k}$, $\nu^{(i)}_{k}$ and $(\tau^2_{k})^{(i)}$ can be calculated by (see, e.g., \cite{titterington})
\begin{align*}
\kappa^{(i)}_k&=\kappa_0+g^{(i)}_k,\quad
\nu^{(i)}_{k}=\nu_0+g^{(i)}_k,\quad g^{(i)}_k=\sum_{t=1}^n \gamma^{(i)}_t(k),\\
\mu^{(i)}_{k}&={\kappa_0\over \kappa_0+g^{(i)}_k }\xi_k+{g^{(i)}_k\over \kappa_0+g^{(i)}_k}
\tilde{x}^{(i)}_k,\quad \tilde{x}^{(i)}_k={1\over g^{(i)}_k}\sum_{t=1}^n \gamma^{(i)}_t(k) x_t,\\
\nu^{(i)}_k (\tau^2_{k})^{(i)}&=\nu_0\tau_0^2+\sum_{t=1}^n(x_t-\tilde{x}^{(i)}_k)^2\gamma^{(i)}_t(k)+{\kappa_0
g^{(i)}_k\over \kappa_0+g^{(i)}_k}(\tilde{x}^{(i)}_k-\xi_k)^2.\end{align*}
Compute then $h^{(i+1)}_k(x_t)$ as in (\ref{hht}). With help of
$h^{(i+1)}_k(x_t)$ and $u^{(i+1)}_{lj}$,  find $\gamma^{(i+1)}_t$
and $\xi^{(i+1)}(l,j)$ using the forward-backward formulae. After
that update $h^{(i+2)}_k(x_t)$ and $u^{(i+2)}_{lj}$ and so on.
When the VB algorithm has converged, say after $m$ steps,
apply the Viterbi algorithm with $u_{lj}^{(m)}$ as transitions and
with $h^{(m)}_k(x_t)$ as emission values.
%
\subsection{Clustering formulae under normal emissions with NIX priors}
From (\ref{jaana}) it follows that for any sequence $y'$, the likelihood ratio is given by
\begin{equation}\label{ratio}
{p(x|y)\over p(x|y')}=\prod_{k=1}^K{\Gamma({\nu_0+m_k\over 2})\over
\Gamma({\nu_0+m'_k\over 2})} \prod_{k=1}^K{\sqrt{\kappa_0+m'_k\over
\kappa_0+m_k}} \prod_{k=1}^K {(\nu'_{k}\tau^{'2}_{k})^{{\nu_{0}\over
2}}\over (\nu_{k}\tau^2_{k})^{{\nu_0\over
2}}} \prod_{k=1}^K {(\nu'_{k}\tau^{'2}_{k})^{{m'_k\over
2}}\over (\nu_{k}\tau^2_{k})^{{m_k\over
2}}}.\end{equation}
When $\nu_0\to \infty$ and $\tau^2_0>0$, then due to $\sum_k m_k=\sum_k m'_k=n$ we have
$$\lim_{\nu_0\to \infty}\prod_{k=1}^K{\Gamma({\nu_0+m_k\over 2})\over
\Gamma({\nu_0+m'_k\over 2})}=1,\quad \lim_{\nu_0\to \infty} \prod_{k=1}^K
{(\nu'_{k}\tau^{'2}_{k})^{{m'_k\over 2}}\over
(\nu_{k}\tau^2_{k})^{{m_k\over 2}}}= 1.$$
Write $\nu_{k}\tau^{2}_{k}$ as
$$\nu_{k}\tau^{2}_{k}=\nu_0\tau_0^2 + \sum_{t\in S_k} (x_t-\bar{x}_k)^2  + \frac{\kappa_0 m_k
}{\kappa_0 +m_k}(\bar{x}_k-\xi_k)^2=\nu_0\tau_0^2+A_k=\nu_0\tau_0^2\left(1+{2A_k\over 2\nu_0\tau_0^2}\right).$$
Then
$$ {(\nu'_{k}\tau^{'2}_{k})^{{\nu_{0}\over 2}}\over
(\nu_{k}\tau^2_{k})^{{\nu_0\over 2}}}={\Big(1+{2A'_k\over
2\nu_0\tau_0^2}\Big)^{\nu_0\over 2} \over \Big(1+{2A_k\over
2\nu_0\tau_0^2}\Big)^{\nu_0\over 2}}\to \exp\left[{A'_k-A_k\over
2\tau^2_0}\right].$$
Therefore, when $\nu_0\to \infty$, then the likelihood ratio in (\ref{ratio}) converges to
$$\prod_{k=1}^K{\sqrt{\kappa_0+m'_k\over
\kappa_0+m_k}} \exp\left[{\sum_k A'_k-\sum_kA_k\over 2\tau^2_0}\right].$$
Thus, maximizing $p(x|y)$ corresponds to the following clustering problem:
find clusters $S_1,\ldots, S_k$ that minimize
\[ \sum_{k=1}^K \sum_{t\in S_k} (x_t-\bar{x}_k)^2+\kappa_0\sum_{k=1}^K{m_k\over \kappa_0+m_k}(\bar{x}_k-\xi_k)^2+\tau_0^2\sum_{k=1}^K\ln (\kappa_0+m_k),
\]
which is formula (\ref{clust}). Given cluster $S_k$, it is easy to see that
\begin{equation}\label{center}
\arg\min_{\mu \in {\mathbb R}}\Big[\sum_{t\in S_k}(x_t-\mu)^2+\kappa_0(\mu-\xi_k)^2\Big]={m_k{\bar
x}_k+\kappa_0\xi_k\over \kappa_0+m_k}=:\mu_k.\end{equation}
Since
\[ \sum_{t\in S_k}(x_t-\mu_k)^2+\kappa_0(\mu_k-\xi_k)^2=\sum_{t\in S_k}
(x_t-\bar{x}_k)^2+\kappa_0{m_k\over \kappa_0+m_k}(\bar{x}_k-\xi_k)^2,\]
we obtain (\ref{clustequiv}).

To understand the behavior of the segmentation EM and segmentation MM
algorithms when $\nu_0\to \infty$, recall the segmentation EM iteration
formula from (\ref{hht}).
When $\nu_0\to \infty$, then
$\ln(\nu^{(i)}_{k}/{2})-\psi({\nu^{(i)}_{k}}/{2})\to 0$ and $(\tau_{k}^2)^{(i)}\to \tau_0^2$.
Thus, leaving the superscript $(i)$ out of the notation, we get
\[\ln h_k(x_t) \to -{1\over 2}\ln\big(2\pi
(\tau_0^2)\big)-{1\over 2 \tau^2_0}\big(x_t- \mu_k \big)^2 -{1\over 2(\kappa_0+m_k)}, \]
where $\mu_k$  is as in (\ref{center}).
The Viterbi alignment is now obtained as
$$y_t=\arg\min_{k=1,\ldots,K}\Big[\big(x_t- \mu_k\big)^2+{\tau_0^2 \over
m_k+\kappa_0}\Big].$$
%
%
%
\subsection*{Acknowledgments} This work is supported by  Estonian institutional research funding IUT34-5.
\bibliographystyle{plain}

\bibliography{bayesbib}

\end{document}